\DeclareMathOperator{\E}{\mathbb{E}}
\DeclareMathOperator{\poly}{poly}
\DeclareMathOperator*{\argmin}{arg\,min}
\newtheorem{theorem}{Theorem}[section]
\newtheorem{lemma}[theorem]{Lemma}
\newtheorem{definition}{Definition}[section]
\let\oldmarginpar\marginpar
\renewcommand\marginpar[1]{\-\oldmarginpar[\raggedleft\footnotesize #1]%
{\raggedright\footnotesize #1}}
\newcommand{\idlow}[1]{\mathord{\mathcode`\-="702D\it #1\mathcode`\-="2200}}
\newcommand{\id}[1]{\ensuremath{\idlow{#1}}}
\newcommand{\litlow}[1]{\mathord{\mathcode`\-="702D\sf #1\mathcode`\-="2200}}
\newcommand{\lit}[1]{\ensuremath{\litlow{#1}}}
\newcommand{\namedref}[2]{\hyperref[#2]{#1~\ref*{#2}}}
\newcommand{\figurerefb}[2]{\hyperref[#1]{Figure~\ref*{#1}#2}}
\newcommand{\equationref}[1]{\hyperref[#1]{(\ref*{#1})}}
\renewcommand{\eqref}{\equationref}
\newcommand{\DEBUG}[1]{}
\renewcommand{\paragraph}[1]{\noindent \textbf{#1}}
\newenvironment{proof}{\textit{Proof}.}{\hfill$\square$}
\date{}
\title{Distributionally Linearizable Data Structures}
\author{Dan Alistarh \\ IST Austria \and Trevor Brown \\ IST Austria \and Justin Kopinsky \\ MIT \and Jerry Z. Li \\MIT \and Giorgi Nadiradze \\ ETH Zurich}
\begin{document}
\maketitle
\begin{abstract}
Relaxed concurrent data structures have become increasingly popular, due to their scalability in graph processing and machine learning applications (\cite{Nguyen13, gonzalez2012powergraph}). 
%
%
%
%
Despite considerable interest, there exist families of natural, high performing \emph{randomized} relaxed concurrent data structures, such as the popular MultiQueue~\cite{MQ} pattern for implementing relaxed priority queue data structures,  
for which no guarantees are known in the concurrent setting~\cite{AKLN17}. 

Our main contribution is in showing for the first time that, under a set of analytic assumptions, a family of relaxed concurrent data structures, including variants of MultiQueues, but also a new approximate counting algorithm we call the MultiCounter, 
 provides strong probabilistic guarantees on the degree of relaxation with respect to the sequential specification, \emph{in arbitrary concurrent executions}. 
We formalize these guarantees via a new correctness condition called \emph{distributional linearizability}, tailored to concurrent implementations with randomized relaxations. 
Our result is based on a new analysis of an \emph{asynchronous} variant of the classic power-of-two-choices load balancing algorithm, in which placement choices can be based on inconsistent, outdated information (this result may be of independent interest). 
We validate our results empirically, showing that the MultiCounter algorithm can implement scalable relaxed timestamps, which in turn can improve the performance of the classic TL2 transactional algorithm by  up to $3\times$, for some settings of parameters.

%
%

\end{abstract}

\section{Introduction}

Consider a system of $n$ threads, which share a set of $n$ distinct atomic counters. We wish to implement a \emph{scalable approximate counter}, which we will call a \emph{MultiCounter}, by distributing the contention among these $n$ distinct instances: 
to \emph{increment} the global counter, a thread selects two atomic counters $i$ and $j$ uniformly at random, reads their values, and (atomically) increments by $1$ the value of the one which has \emph{lower value} according to the values it read.   
To \emph{read} the global counter, the thread returns the value of a randomly chosen counter $i$, multiplied by $n$. \footnote{This multiplication serves to maintain the same magnitude as the total number of updates to the distributed counter up to a point in time.}

The astute reader will have noticed that this process is similar to the classic two-choice load balancing process~\cite{ABKU}, in which a sequence of balls are placed into $n$ initially empty bins, 
and, in each step, a new ball is placed into the less loaded of two randomly chosen bins. Here, the individual atomic counters are the \emph{bins}, and each increment corresponds to a new \emph{ball} being added. 
This sequential load balancing process is extremely well studied~\cite{Richa01, Mitz}: a series of deep technical results established that the difference between the most loaded bin and the average is $O( \log \log n )$ in expectation~\cite{ABKU, Mitz}, and that this difference remains stable as the process executes for increasingly many steps~\cite{Berenbrink00, PTW15}. 
We would therefore expect the above relaxed concurrent counter to have relatively low and stable skew among the outputs at consecutive operations, and to scale well, as contention is distributed among the $n$ counters.

However, there are several technical issues when attempting to analyze this natural process in a concurrent setting. 
\begin{itemize} 

\item First, concurrency interacts with classic two-choice load balancing process in non-trivial ways. 
The key property of the two-choice process which ensures good load balancing is that trials are \emph{biased towards less loaded bins}---equivalently, operations are biased towards incrementing counters of lesser value.   
However, this property may break due to concurrency: \emph{at the time of the update}, a thread may end up updating the counter of \emph{higher} value among its two choices if the counter of smaller value is updated concurrently since it was read by the thread, thus surpassing the other counter. 

\item Second, perhaps suprisingly, it is currently unclear how to even \emph{specify} such a  concurrent data structure. Despite a significant amount of work on specifying \emph{deterministic relaxed} data structures~\cite{Henzinger, Quasi, LocalLin} , none of the existing frameworks cover relaxed \emph{randomized} data structures.   

\item Finally, assuming such a data structure can be analyzed and specified, it is not clear whether it would be in any way \emph{useful}: many existing applications are built around data structures with deterministic guarantees, and it is not obvious how scalable, relaxed data structures can be leveraged in standard concurrent settings. 

\end{itemize}

One may find it surprising that analysing such a relatively simple concurrent process is so challenging. 
Beyond this specific instance, these difficulties reflect wider issues in this area: 
although these constructs are reasonably popular in practice due to their good scalability, e.g.~\cite{Basin11, Nguyen13, klsm, MQ}, their properties are non-trivial to pin down~\cite{AKLN17}, 
and it is as of yet unclear how they interact with the higher-order algorithmic applications they are part of~\cite{lenharth2015priority}.

\paragraph{Contribution.} 
In this work, we take a step towards addressing these challenges. Specifically: 
\begin{itemize}
\item We provide the first analysis of a two-choice load balancing process in an asynchronous setting, where operations may be interleaved, and the interleaving is decided by an adversary. 
We show that the resulting process is robust to concurrency, and continues to provide strong balancing guarantees in potentially infinite executions, as long as the ratio between the number of bins and the number of threads is above a large constant threshold. 

\item We introduce a new correctness condition for \emph{randomized relaxed} data structures, called \emph{distributional linearizability}. 
Intuitively, a concurrent data structure $D$ is distributionally linearizable to a \emph{sequential random process} $R$, defined in terms of a sequential specification $S$, a cost function $\id{cost}$ measuring the deviation from the sequential specification, and a distribution $\mathcal{P}$ on the values of the cost function, if every execution of $D$ can be mapped onto an execution of the relaxed sequential process $R$, respecting the outputs and the  costs incurred, as well as the order of non-overlapping operations.  

\item We prove that the randomized \emph{MultiCounter} data structure introduced above is distributionally linearizable to a (sequential) variant of the classic two-choice load balancing process. This allows us to formally define the properties of MultiCounters. 
Moreover, we show that this analytic framework also covers variants of \emph{MultiQueues}~\cite{MQ}, a popular family of concurrent data structures implementing relaxed concurrent priority queues. 
This yields the first analytical guarantees for MultiQueues in concurrent executions. 

\item We implement the MultiCounters, and show that they can provide a highly scalable approximate timestamping mechanism, with relatively low skew. 
We build on this, and show that MultiCounters can be successfully applied to timestamp-based concurrency control mechanisms such as the TL2 software transactional memory protocol~\cite{tl2}. 
This usage scenario presents an unexpected trade-off: assuming low contention, the resulting TM protocol scales almost linearly, but may break correctness with very low probability. 
In particular, we show that there exist workloads and parameter settings for which this relaxed TM protocol scales almost linearly, improving the performance of the TL2 baseline by more than $3\times$, without breaking correctness. 

\end{itemize}

\paragraph{Techniques.} 
Our main technical contribution is the concurrent analysis of the classic two-choice load balancing process, in an asynchronous setting, where the interleaving of low-level steps is decided by an oblivious adversary. 
The core of our analysis builds on the elegant potential method of Peres, Talwar and Wieder~\cite{PTW15}, which we render robust to asynchronous updates based on potentially stale information. 
To achieve this, we overcome two key technical challenges. 
The first is that, given an operation $\id{op}$, as more and more other operations execute between the point where it reads and the point where it updates, the more stale its information becomes, and so 
the probability that $\id{op}$ makes the ``right" choice at the time of update, inserting into the less loaded of its two random choices, \emph{decreases}. Moreover, operations updating with stale information will ``stampede"  towards lower-weight bins, effectively skewing the distribution. 
The second technical issue we overcome is that long-running operations, which experience a lot of concurrency, may in fact be adversarially biased towards the wrong choice, inserting into the more loaded of its two choices with non-trivial probability. We discuss these issues in detail in Section~\ref{sec:prelim}. 

In brief, our analysis circumvents this issues by showing that a variant of the two-choice process where up to a constant fraction of updates are corrupted, in the sense that they perform the ``wrong" update, will still have similar balance properties as the original process. It is interesting to note that even the \emph{order} in which corrupted updates occur can be controlled by the adversary through increased concurrency, which is not the case in standard analyses~\cite{PTW15}. The critical property which we leverage in our analysis is that, while individual operations can be arbitrarily contended (and therefore biased), there is a bound of $n$ on the \emph{average} contention per operation, which in turn bounds the average amount of bias the adversary can induce over a period of time. Our argument formalizes this intuition, and phrases it in terms of the evolution of the potential function. 

We show that this result has implications beyond ``parallelizing" the classic two-choice process, as we can leverage it to obtain probabilistic bounds on the skew of the MultiCounter. 
Using the framework of~\cite{AKLN17}, which connected two-choice load balancing with MultiQueue data structures in the sequential case, we can obtain guarantees for this popular data structure pattern in concurrent executions. 

\section{Related Work}

\paragraph{Randomized Load Balancing.} The classic two-choice balanced allocation process was introduced in~\cite{ABKU}, 
where the authors show that, under two-choice insertion, the most loaded among $n$ bins is at most $O( \log \log n )$ above the average, both in expectation and with high probability.  
The literature studying analyses and extensions of this process is extremely vast, hence we direct the reader to~\cite{Richa01, Mitz} for in-depth surveys of these techniques. 
Considerable effort has been dedicated to understanding guarantees in the ``heavily-loaded" case, where the number of insertion steps is unbounded~\cite{Berenbrink00, PTW15}, 
and in the ``weighted" case, in which ball weights come from a probability distribution~\cite{TW07, Berenbrink08}. 
A tour-de-force by Peres, Talwar, and Wieder~\cite{PTW15} gave a potential argument characterizing a general form of the heavily-loaded, weighted process on \emph{graphs}. Our analysis starts from their framework, and modifies it to analyze a concurrent, adversarial process. 
One significant change from their analysis is that, due to the adversary, changes in the potential are only partly stochastic: most steps might be slightly biased away from the better of the two choices, while a subset of choices might be almost deterministically biased towards the \emph{wrong} choice. Further, the adversary can decide the \emph{order} in which these different steps, with different biases, occur. 

Lenzen and Wattenhofer~\cite{Lenzen} analyzed \emph{parallel} balls-into-bins processes, in which $n$ balls need to be distributed among $n$ bins, under a communication model between the balls and the bins, showing that almost-perfect allocation can be achieved in $O(\log^* n)$ rounds of communication. This setting is quite different from the one we consider here. 
Similar delayed information models, where outdated information is given to the insertion process were considered by Mitzenmacher~\cite{Mitzenmacher00} and by Berenbrink, Czumaj, Englert, Fridetzky, and Nagel~\cite{BCEFN12}. The former reference proposes a bulletin board model with periodic updates, in which information about the load of the model is updated only periodically (every $T$ seconds), and various allocation mechanisms. The author provides an analysis of this process in the asymptotic case (as $n \rightarrow \infty$), supported by simulations. 
The latter reference~\cite{BCEFN12} considers a similar model where balls arrive in \emph{batches}, and must perform allocations collectively based solely on the information available at the beginning of the batch, without additional communication. 
The authors prove that the greedy multiple-choice process preserves its strong load balancing properties in this setting: in particular, the gap between min and max remains $O( \log n )$. 
The key difference between these models and the one we consider is that our model is completely asynchronous, and in fact the interleavings are chosen adversarially. The technique we employ is fundamentally different from those of~\cite{Mitzenmacher00, BCEFN12}. In particular, we believe our techniques could be adapted to re-derive the main result of~\cite{BCEFN12}, albeit with worse constants.

Recent work by a subset of the authors~\cite{AKLN17} analyzed the following producer-consumer process: a set of balls labelled $1, 2, \ldots, b$ are inserted sequentially at random into $n$ bins; in parallel, balls are removed from the bins by always picking the lower labelled (higher priority) of two uniform random choices.\footnote{Balls in each bin are sorted in increasing order of label, i.e. each bin corresponds to a sequential priority queue.} This process sequentially models a series of popular implementations of concurrent priority queue data structures, e.g.~\cite{MQ, Haas}. This process provides the following guarantees: in each step $t$, the expected \emph{rank} of the label removed among labels still present in the system is $O( n )$, and $O( n \log n )$ with high probability in $n$. 
That is, this sequential process provides a structured probabilistic relaxation of a standard priority queue. 

\paragraph{Relaxed Data Structures.} The process considered in~\cite{AKLN17} is sequential, whereas the data structures implemented are concurrent. 
Thus, there was a significant gap between the theoretical guarantees and the practical implementation. Our current work extends to concurrent data structures, closing this gap. 
Under the oblivious adversary assumption and given our parametrization, we  show for the first time that practical data structures such as~\cite{MQ, Haas, AKLN17} provide guarantees in real executions. 

Designing efficient concurrent/parallel data structures with relaxed semantics was initiated by Karp and Zhang~\cite{KarZha93}, with other significant early work by Deo and Prasad~\cite{DeoPra92} and Sanders~\cite{San98}. 
It has recently become an extremely active research area, see e.g.~\cite{LotanShavit, Basin11, klsm, SprayList, Haas, Nguyen13, MQ, AKLN17} for recent examples. 
 To the best of our knowledge, ours is the first analysis of randomized relaxed concurrent data structures which works under arbitrary oblivious schedulers: previous analyses such as~\cite{SprayList, MQ, AKLN17} required strong assumptions on the set of allowable interleavings. 
Dice et al.~\cite{DiceLM13} considered randomized data structures for scalable exact and approximate counting. They consider the efficient parallelization of sequential approximate counting methods, and therefore have a significantly different focus than our work.

\section{System Model}

\paragraph{Asynchronous Shared Memory.} We consider a standard asynchronous shared-memory model, e.g.~\cite{AttiyaWelch},
in which $n$ threads (or processes) $P_1, \ldots, P_{n}$, communicate through shared memory, on which they perform atomic operations such as \lit{read}, \lit{write},  \lit{compare-and-swap} and \lit{fetch-and-increment}. 
The \lit{fetch-and-increment} operation takes no arguments, and returns the value of the register before the increment was performed, incrementing its value by $1$. 

\paragraph{The Oblivious Adversarial Scheduler.} Threads follow an algorithm,  composed of
shared-memory steps and local computation, including random coin flips. 
The order of process steps is controlled by an adversarial entity we call the \emph{scheduler}. 
Time $t$ is measured in terms of the number of shared-memory steps scheduled by the adversary. 
The adversary may choose to crash a set of at most $n - 1$ processes by not scheduling them for the rest of the execution.  
A process that is not crashed at a certain step is \emph{correct}, and if it never crashes then it takes an infinite number of steps in the execution.
In the following, we assume a standard \emph{oblivious} adversarial scheduler, which decides on the interleaving of thread steps independently of the coin flips they produce during the execution.  

\paragraph{Shared Objects.} The algorithms we consider are implementations of shared objects. 
A shared object $O$ is an abstraction providing a set of \emph{methods},
each given by a sequential specification. 
In particular, an implementation of a method $n$ for an object $O$ is a set of $n$ algorithms,
one for each executing process.
When thread $P_i$ invokes method $n$ of object $O$,
it follows the corresponding algorithm until it receives a response from the algorithm.
Upon receiving the response, the process is immediately assigned another method invocation.
In the following, we do not distinguish between a method $n$ and its implementation.
A method invocation is \emph{pending} at some point in the execution if has been initiated but has not yet received a response.
A pending method invocation is \emph{active} if it is made by a \emph{correct} process (note that the process may still crash in the future).
For example, a concurrent counter could implement $\lit{read}$ and $\lit{increment}$ methods, with the same semantics as those of the  sequential data structure. 

\paragraph{Linearizability.} The standard correctness condition for concurrent implementations is \emph{linearizability}~\cite{HerlihyWing}: roughly, a linearizable implementation ensures that each concurrent operation can be seen as executing at a single instant in time, called its linearization point. 
The mapping from method calls to linearization points induces a global order on the method calls, which is guaranteed to be consistent to a sequential execution in terms of the method outputs; moreover, each linearization point must occur between the start and end time of the corresponding method. 

Recent work, e.g.~\cite{Henzinger}, considers deterministic relaxed variants of linearizability, in which operations are allowed to deviate from the sequential specification by a \emph{relaxation} factor. Such relaxations appear to be necessary in the case of data structures such as exact counters or priority queues in order to circumvent strong linear lower bounds on their concurrent complexity~\cite{Alistarh14}. 
While specifying such data structures in the concurrent case is well-studied~\cite{Henzinger, Quasi, LocalLin}, less is known about how to specify structured randomized relaxations. 

\paragraph{With High Probability.} We say that an event occurs \emph{with high probability} in a parameter, e.g. $n$, if it occurs with probability at least $1 - 1 / m^{con}$, for some constant $c \geq 1$.


\section{The MultiCounter Algorithm}

\paragraph{Description.} The algorithm implements an approximate counter by distributing updates among $n$ distinct counters, each of which supports atomic  $\lit{read}$ and $\lit{increment}$ operations. Please see Algorithm~\ref{algo:approx} for pseudocode. To read the counter value, a thread simply picks one of the $n$ counters uniformly at random, and returns its value multiplied by $n$. 
To increment the counter value, the thread picks two counter indices $i$ and $j$ uniformly at random, and reads their current values sequentially. 
It then proceeds to update (increment) the value of the counter which appeared to have a lower value given its two reads. (In case of a tie, or when the two choices are identical, the tie is broken arbitrarily.)

\begin{algorithm}[ht]
\caption{Pseudocode for the MultiCounter Algorithm.}
 \label{algo:approx}
\begin{algorithmic}
\STATE \textbf{Shared}: $\id{Counters}[m]$ \textit{// Array of integers representing set of $n$ distinct counters}

\STATE \textbf{function} \lit{Read}()
	\STATE $i \gets \lit{random}(1, n)$
	\STATE \textbf{return} $n \cdot \id{Counters}[i].\lit{read}()$
\STATE
\STATE \textbf{function} \lit{Increment}()
%
	\STATE $i \gets \lit{random}(1, n)$
	\STATE $j \gets \lit{random}(1, n)$
	\STATE $x_i \gets \id{Counters}[i].\lit{read}()$
	\STATE $x_j \gets \id{Counters}[j].\lit{read}()$
	\STATE $Counters[\argmin(x_i,x_j)].\lit{increment}()$

\end{algorithmic}

\end{algorithm}

\paragraph{Relation to Load Balancing.} A \emph{sequential} version of the above process, in which the counter is read or incremented \emph{atomically}, is identical to the classic two-choice balanced allocation process~\cite{ABKU}, where each counter corresponds to a bin, and each increment corresponds to a new ball being inserted into the less loaded of two randomly chosen bins. 

In a concurrent setting, the critical departure from the sequential model is that the values read can be \emph{inconsistent} with respect to a sequential execution: there may be no single point in time when the two counters had the values $x_i$ and $x_j$ observed by the thread; moreover, these values may change between the point where they are read, and the point where the update is performed. 

More technically, the sequential variant of the two-choice process has the crucial property that, at each increment step, it is ``biased" towards incrementing the counter of lower value. 
This does not necessarily hold for the concurrent approximate counter: for an operation where a large number of updates occur between the read and the update points, the read information is stale, and therefore the thread's increment choice may be no better than a perfectly random one; in fact, as we shall see in the analysis, it is actually possible for an adversary to engineer cases where the algorithm's choice is biased towards incrementing the counter of  \emph{higher value}.

\section{Distributional Linearizability}

We generalize the classic linearizability correctness condition to cover \emph{randomized relaxed} concurrent data structures, such as the MultiCounter. 
Intuitively, we will say that a concurrent data structure $D$ is distributionally linearizable to a corresponding \emph{relaxed sequential process} $R$, defined in terms of a sequential specification $S$, a cost function $\id{cost}$ measuring the deviation from the sequential specification, and a distribution $\mathcal{P}$ on the \emph{cost} function values, such that every execution of $D$ can be mapped onto an execution of the relaxed sequential process $R$, respecting the outputs and the incurred costs, as well as the order of non-overlapping operations.  
To formalize this definition, we introduce the following machinery, part of which is adopted from~\cite{Henzinger}. 

\paragraph{Data Structures and Labeled Transition Systems.} 
Let $\Sigma$ be a set of methods including input and output values. 
A sequential history $s$ is a sequence over $\Sigma$, i.e. an element in $\Sigma^*$. 
A (sequential) data structure is a sequential specification $S$ which is a prefix-closed set of sequential histories. 
For example, the sequential specification of a stack consists of all valid sequences for a stack, i.e. in which every \lit{push} places elements on top of the stack, and every \lit{pop} removes elements from the top of the stack. 

Given a sequential specification $S$, two sequential histories $s, t \in S$ are equivalent, written $s \simeq t$, if they correspond to the same ``state:" formally, for any sequence $u \in \Sigma^*$, $su \in S$ iff $tu \in S$. Let $[s]_S$ be the equivalence class of $s \in S$. 

\begin{definition}
	\label{def:lts}
	Let $S$ be a sequential specification. 
	Its corresponding labeled transition sequence (LTS) is an object $LTS (S) = (Q, \Sigma, \rightarrow, q_0)$, with states $Q = \{ [s]_S | s \in S \}$, set of labels $\Sigma$, transition relation $\rightarrow \subseteq Q \times \Sigma \times Q$ given by 
		$ [s]_S \rightarrow^m [sm]_S$ iff $sm \in S, \textnormal{ and }$
 initial state $q_0 = [\epsilon]_S$.  
\end{definition}

Notice that the sequential specification $S$ can be alternatively defined as the set of all traces of the initial state of $LTS(S)$: formally, for any $u \in \Sigma^*$, we have $u \in S$ iff $q_0 \rightarrow^u$.  

\paragraph{Randomized Quantitative Relaxations.} 
Let $S \in \Sigma^*$ be a data structure with $LTS(S) = (Q, \Sigma, \rightarrow, q_0)$. 
To obtain a randomized quantitative relaxation of $S$, we apply the following four steps. The first three steps are identical to deterministic quantitative relaxations~\cite{Henzinger}, whereas the fourth defines the probability distribution on costs:
\begin{enumerate}
	\item \textbf{Completion}: We start from $LTS(S)$, and construct a completed labeled transition system, with transitions from any state to any other state by any method:
	$$ LTS_c(S) = (Q, \Sigma, Q\times\Sigma\times Q, q_0).$$
	\item \textbf{Cost function:} We add a cost function $cost : Q\times\Sigma\times Q \rightarrow \mathbb{R}$ to the LTS. 
	The transition cost will satisfy
	$$ cost(q, m, q')  = 0 \textnormal{ if and only if } q \rightarrow^{m} q' \textnormal{ in } LTS(S).$$
	A quantitative path is a sequence 
	$$ \kappa = q_1 \rightarrow^{m_1, k_1} q_2 \rightarrow^{m_2, k_2} \ldots \rightarrow^{m_n, k_n} q_{n + 1}. $$
	We call the  sequence $\tau = (m_1, k_1), \ldots, (m_n, k_n)$ of transitions and costs the quantitative trace of $\kappa$, denoted by $qtr(\kappa)$. 
	
	\item \textbf{Path cost function}: Given a quantitative path $\kappa$, its path cost is defined as $pcost : qtr(S) \rightarrow C$. Path costs are monotone with respect to prefix order: if $\tau$ is a prefix of $\tau'$, then $pcost(\tau) \leq pcost (\tau')$. 
	
	\item \textbf{Probability distribution}: Given an arbitrary state $[s]$ in $LTS(S)$, we define a probability space $(\Omega, \mathcal{F}, \mathcal{P})$ on the set of possible transitions and their corresponding costs from this state, where the sample space $\Omega$ is the set of all transitions in $Q\times\Sigma\times Q$, the $\sigma$-algebra $\mathcal{F}$ is defined in the straightforward way based on the set of elementary events $\Omega$, and $\mathcal{P}$ is a probability measure $\mathcal{P} : \mathcal{F} \rightarrow [0, 1]$. 

	Importantly, this allows us to define, for any path, the notion of probability for costs incurred at each step. This probability space is readily extended for arbitrary finite paths, where we assume that the cost probabilities at each step are independent of previous steps, i.e., historyless. This process induces a Markov chain, whose state at each step is given by the state $[s]$ of the corresponding LTS, and whose transitions are LTS transitions, with costs and probabilities as above. 
\end{enumerate}

\paragraph{Distributional Linearizability.} With this in place, we now define distributionally linearizable data structures:
\begin{definition}
	\label{def:dl}
	Let $D$ be a randomized concurrent data structure, and let $R$ be a randomized quantitative relaxation $R$ of a sequential specification $S$ with respect to a cost function \lit{cost}, and a probability distribution $\mathcal{P}$ on costs. We say that $D$ is distributionally linearizable to $R$ iff for every concurrent schedule $\sigma$, there exists a mapping of completed operations in $D$ under $\sigma$ to transitions in the quantitative path of $R$, preserving outputs, and respecting the order of non-overlapping operations. This mapping can be used to associate any schedule $\sigma$ to a \emph{distribution of costs} for $D$ under the schedule $\sigma$.  
\end{definition}

We now make a few important remarks on this definition. 
\begin{enumerate}
\item The main difficulty when formally defining the ``costs" incurred by $D$ in a concurrent execution is in dealing with the execution history, and with the impact of pending operations on these costs. The above definition allows us to define costs, given a schedule, only in terms of the sequential process $R$, and bounds the incurred costs in terms of the probability distribution defined in $R$. 
This definition ensures that the probability distribution on costs incurred at each step only depends on the current state of the sequential process. 

\item The second key question is how to use this definition. One subtle aspect of this definition is that the mapping to the sequential randomized quantitative relaxation is done \emph{per schedule}: intuitively, this is because an adversary might change the schedule, and cause the distribution of costs of the data structure to \emph{change}. 
Thus, it is often difficult to specify a precise cost distribution, which covers all possible schedules. 
However, for the data structures we analyze, we will be able to provide \emph{tail bounds} on the cost distributions induced by \emph{all possible schedules}. 
\end{enumerate}

The natural next question, which we answer in the following section,  is whether non-trivial such data structures exist and can be analyzed. 

\section{Analysis of the MultiCounter}

\noindent We will focus on proving the following result. 

\begin{theorem}
\label{thm:main}
	Given an oblivious adversary, $n$ distributed counters and $n$ threads, for any fixed schedule, 
	the MultiCounter algorithm is distributionally linearizable to a randomized relaxed sequential counter process, which, at any step $t$, returns a value that is at most  $O( n \log^2 n)$ away from the number of increments applied up to $t$, both in expectation and with high probability in $n$. 
\end{theorem}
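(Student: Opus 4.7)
The plan is to reduce the theorem to a concurrent drift analysis of a two-choice balls-into-bins potential, and then translate the resulting gap bound into the required distributional linearizability statement. Throughout, let $x_i(t)$ be the value of the $i$-th counter at time $t$, let $\bar x(t)$ be their average, and define the Peres--Talwar--Wieder potential
\[
\Phi(t) \;=\; \sum_{i=1}^{m}\Bigl(e^{\alpha(x_i(t)-\bar x(t))} + e^{-\alpha(x_i(t)-\bar x(t))}\Bigr),
\]
for a sufficiently small constant $\alpha>0$. The first step is to define the target sequential process $R$: simply the classical sequential two-choice load balancer on $m$ bins, equipped with the cost function $\mathrm{cost}(q,\mathtt{read},q') = m\cdot(x_i - \bar x)$ for the coordinate $i$ read, and cost $0$ for increments. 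Distributional linearizability then requires exhibiting, for each oblivious schedule $\sigma$, a linearization of concurrent operations into a trace of $R$ whose induced cost distribution matches that actually experienced by the MultiCounter.

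The heart of the proof is the concurrent drift step. I would linearize each increment at the time of its \lit{increment} call (and each read at its single atomic sub-step), and then classify increments into \emph{clean} and \emph{dirty} ones. Call an operation dirty if more than some constant number $K$ of other increments occur between its two reads and its update; otherwise clean. Because there are only $n$ threads, at most $n$ operations can be pending at any moment, so an averaging/charging argument shows that over any window of $T$ increments, at most a $\beta$-fraction are dirty, for a constant $\beta$ that shrinks as $m/n$ grows (this is where the hypothesis $m\ge Cn$ is used). For clean operations, the read values are within $K$ of the true values at update time, so a PTW-style calculation gives the standard multiplicative contraction $\E[\Phi(t+1)\mid\Phi(t)] \le (1-\tfrac{\alpha}{m})\Phi(t) + O(1)$. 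For dirty operations, the adversary may bias the update to the larger of the two bins, but a direct bound shows $\E[\Phi(t+1)\mid\Phi(t)] \le (1+\tfrac{\alpha}{m})\Phi(t) + O(1)$, i.e. at most a symmetric expansion. Combining the two in a fraction $(1-\beta)$ vs $\beta$ mixture (and using that the adversary is oblivious, hence the identity of an operation's two random bins is independent of whether it ends up clean or dirty) gives an overall contraction $\E[\Phi(t+1)\mid\Phi(t)] \le (1-\tfrac{\gamma}{m})\Phi(t) + O(1)$ for a positive constant $\gamma$, provided $\beta$ is small enough. Standard supermartingale/hitting-time arguments then yield $\E[\Phi(t)] = O(m)$ and $\Phi(t) = O(m^{1+o(1)})$ with high probability, uniformly in $t$, so $\max_i |x_i(t)-\bar x(t)| = O(\log m)$ in expectation and w.h.p.

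With the gap bound in hand, the rest is bookkeeping. A \lit{Read}() returns $m\cdot x_i$ for a uniformly random $i$, while the true number of increments issued up to $t$ is exactly $m\bar x(t)$ plus a correction of at most $n$ for increments that are linearized but not yet written back; since $n\le m/C$, this correction is absorbed into the gap term. Therefore $|\mathtt{read}(t) - (\text{\#increments up to }t)| \le m\cdot|x_i - \bar x| + n = O(m\log m)$ in expectation and w.h.p., matching the theorem's cost distribution bound. Finally, to produce the mapping required by Definition~\ref{def:dl}, order the methods by their linearization points and assign costs according to the realized $m(x_i-\bar x(t))$ values; since the cost distribution at each step of $R$ depends only on the current bin configuration, and we have just shown the concurrent configuration evolves as a noisy version of this Markov chain with the same tail behavior, the induced per-schedule cost distribution is dominated in distribution by $R$'s, which is what distributional linearizability requires.

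The main obstacle I anticipate is the drift step for dirty operations: one has to argue that even when the adversary schedules a burst of long-delayed updates back-to-back, the potential cannot blow up faster than the clean operations can damp it. The key leverage is that ``dirtiness'' is a global resource bounded by $n$, so bursts of dirty updates must be followed by stretches of clean ones; formalizing this via an amortized potential that charges each dirty step against subsequent clean steps is, I expect, the most delicate part of the argument, and the place where the constant $C$ in $m \ge Cn$ is pinned down.
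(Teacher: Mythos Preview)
Your overall architecture matches the paper's: classify increments by contention, show low-contention ones contract the PTW potential while high-contention ones only mildly expand it, bound the density of the latter, and amortize over windows. But the instantiation has a real gap. You set the clean/dirty threshold at a fixed constant $K$ and assert that the dirty fraction shrinks as $m/n$ grows; this cannot be right, since whether an operation is dirty depends only on the schedule and on $K$, not on $m$ at all. In a lockstep schedule where all $n$ threads read together and then update one at a time, $n-K$ out of every $n$ updates are dirty, so the dirty fraction is $1-K/n$, arbitrarily close to $1$ regardless of $m$. Your mixture argument then gives no contraction.

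The paper's fix is to place the threshold at $Cn$ rather than at an absolute constant, and to use the hypothesis $m\ge 4Cn$ in a different place: not to bound the bad fraction, but to show that for any operation with contention $\le Cn$, the bin it writes to is untouched by every concurrent update with probability at least $(1-2/m)^{Cn}\ge 7/10$. When the chosen bin is untouched it is still the smaller of the two choices at write time, so the rank-selection probability vector of such an operation majorizes that of a sequential $(1+\beta)$ step with $\beta=2\gamma$, and the PTW contraction follows directly. Your weaker property ``read values are within $K$ of the truth'' does not by itself deliver this majorization, since the order of the two sampled bins can still flip even when both values move by at most $K$. The bad-fraction bound is then a separate pigeonhole on threads---at most $n$ of any $Cn$ consecutive operations can have contention exceeding $Cn$, because two such operations by the same thread would force the later one to have contention below $Cn$---and uses nothing about $m$. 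With these corrections, the windowed amortization you sketch in your last paragraph goes through essentially as you anticipate.
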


\noindent We emphasize that the relaxation guarantees are independent of the time $t$ at which the guarantee is examined, and that they would thus hold in infinite executions. 

\subsection{Modeling the Concurrent Process} 
\label{sec:prelim}
In the following, we will focus on analyzing executions formed exclusively of \lit{increment} operations, whose lower-level steps may be interleaved. 
(Adding \lit{read} operations at any point during the execution will be immediate.) 
We model the process as follows. First, we assume a schedule that is fixed by the adversary. 
For each thread $P_j$, and non-negative integers $j$, we consider a sequence of \lit{increment} operations $(\id{op}^{(j)}_i)$, each of which is defined by its starting time $s^{(j)}_i$, corresponding to the time when its first read step was scheduled, and completion time $f^{(j)}_i$, corresponding to the time when its update time is scheduled, such that $s^{(j)}_{i+1} > f^{(j)}_i$ for all $i,j$. (Recall that the scheduler defines a global order on individual steps.) At most $n$ operations may be active at a given time, corresponding to the fact that we only have $n$ parallel threads. 

For each operation $\id{op}_i$, we record its \emph{contention} $\ell_i$ as the number of distinct \lit{increment} operations scheduled between its start and end time. (Alternatively, we could define this quantity as the number of operations which complete in the time interval $(s_i, f_i)$.) Note that at most $n - 1$ distinct operations can be concurrent with $\id{op}_i$ at any given time, but the contention for a specific operation is potentially unbounded. 

We can rephrase the original process as follows. 
For each operation $\id{op}_i$, the adversary sets the time when it performs the first and its second read of counter values / bin weights, as well as 
its contention $\ell_i$, by scheduling other operations concurrently. The only constraint on the adversary is that not more than $n$ operations can be active at the same time. 

Since the adversary is oblivious, we notice that the update process is equivalent to the following: 
at the time when the update is scheduled, the thread executing the operation generates two uniform random indices $i$ and $j$, and is given values $x_i$ and $x_j$ for the two corresponding counters / bin weights, read at previous (possibly different) points in time. We will stick to the bin weight formulation from now on, with the understanding that the two are equivalent. 

The thread will then increment the weight of the bin with the smaller value read (among $x_i$ and $x_j$) by $1$. 
This formulation has the slight advantage that it makes the update process sequential, by moving the random choices to the time when the update is made, using the principle of deferred decisions. Critically, the bin weights on which the update decision is based are potentially stale. 
We will focus on this simplified variant of the process in the following.

\paragraph{Discussion.}
The key difference between the above process and the classic power-of-two-choices process is the fact that the choice of bin / counter which the thread updates is based on stale, potentially invalid information. Recall that key to the strong balancing properties of the classic process is the fact that it is biased towards inserting in \emph{less loaded} bins; the process which inserts into randomly chosen bins is known to diverge~\cite{PTW15}. 
In particular, notice it is possible that, by the time when the thread performs the update, the order of the bins' load may have changed, i.e. the thread in fact inserts into the \emph{more loaded} bin among its two choices at the time of the update. 

Since the oblivious adversary decides its schedule independently of the threads' random choices, it cannot \emph{deterministically} cause a specific update to insert into the more loaded bin. However, it can \emph{significantly bias} an update towards inserting into the more loaded bin: 

Assume for example an execution suffix where all $n$ threads read concurrently at some time $t_R$\footnote{Technically, since we count time in terms of shared-memory operations, these reads occur at consecutive times after $t_R$. However, all their read values are identical to the read value at $t_R$, and hence we choose to simplify notation in this way.} and then proceed to perform updates, one after another. 
Pick an operation $\id{op}$ for which the gap between the two values read $x_i$ and $x_j$ (at the time of the read) is $1$, say $x_i = x_j + 1$. 
So $\id{op}$ will increment $x_j$. At the same time, notice that all the other operations which read concurrently with $\id{op}$ are biased towards inserting in $x_j$ rather than $x_i$, since its rank (in increasing order of weight) is lower than that of bin $i$. 
Hence, as the adversary schedules more and more operations between $t_R$ at $\id{op}$'s update time, it is increasingly likely to \emph{invert} the relation between $i$ and $j$ by the time of $\id{op}$'s update, causing it to insert into the ``wrong" bin. 

The previous example suggests that the adversary is able to bias some subset of the operations towards picking the wrong bin at the time of the update. 
Another issue is that operations which experience high contention, for which there are many updates between the read point and the update point, the read values $x_i$ and $x_j$ become meaningless: for example, if the weights of bin $i$ and $j$ become equal at some time $t_0$ between $t_R$ and $\id{op}$'s update, then from this point in time these two bins appear completely symmetrical to the algorithm, and $\id{op}$'s choice given the information that $x_i > x_j$ at $t_R$ may be no better than uniform random. 

One issue which further complicates this last example is that, at $t_0$, there may be a non-zero number of other operations which already made their reads (for instance, at $t_R$), but have not updated yet. Since these operations read at a point where $x_i > x_j$, they are in fact biased towards inserting in $x_j$. 
So, looking at the event that $\id{op}$ updates the \emph{less loaded} of its two random choices at update time, we notice that its probability in this example is \emph{strictly worse} than uniform random choice. 

We summarize this somewhat lengthy discussion with two points, which will be useful in our analysis: 

\begin{enumerate}
	\item As they experience concurrent updates, operations may accrue bias towards inserting into the \emph{more loaded} of their two random choices. 
	\item Long-running operations may in fact have a higher probability of inserting into the more loaded bin than into the less loaded one, i.e. may become biased towards making the ``wrong" choice at the time of the update.  
\end{enumerate}

\subsection{Notation and Background} \label{sec:definitions}

\noindent For any bin $i$ and time $t$, let $x_i(t)$ be the weight of bin $i$ at time $t$
and let $x(t)=(x_1(t),x_2(t),...,x_n(t))$ be a vector of weights.
Let $\mu(t) = \sum_{i = 1}^{m} x_i(t) / n$ be the average weight at time $t$ over the bins. 
Let $\alpha < 1$ be a parameter to be fixed later. 
At each step $t+1$, instead of increment by 1 we allow increment $w(t)$ to be a random variable.
Even though we initially concentrate on the case with counters ($w(t)=1$), it is useful to 
prove several general Lemmas with random weights in mind, since
we will need to use them later.

Define  $$\Phi^{seq}(t) = \sum_{i = 1}^n e^{\alpha (x_i(t)-\mu(t))}, \,\text{and}\, \Psi^{seq}(t) = \sum_{i = 1}^n e^{ - \alpha (x_i(t)-\mu(t))}.$$  

\noindent Finally, define the potential function 
$$ \Gamma^{seq}(t) = \Phi^{seq}(t) + \Psi^{seq}(t).$$ 

We use superscript $seq$ to denote potential functions given by sequential process, which always increments
the counter with the smaller load (In our concurrent process this is not true).
In order to bound $\Gamma_{seq}$, $w(t)$ should have the following properties :

    \begin{equation} \label{weightproperty1}
    \E[w(t)]=1
    \end{equation}
    and
    there exist constants $S \ge 1$ and $\lambda>0$, such that for any $|x|\le \lambda/2$: 
    \begin{equation} \label{weightproperty2}
        \E[(e^{x w(t))})'']=\E[M''(x)]<2S.
    \end{equation}

In the case of counters ($w(t)=1$), we can use $\lambda=1$ and $S=1$ since $e^{\frac{1}{2}}\le 2$.

\noindent The main technical result of~\cite{PTW15} can be phrased as:
\begin{theorem}
\label{thm:beta}
    Let $\epsilon = \frac{1}{16}$ and let $\alpha \le \min{\left(\frac{\epsilon}{6S}, \frac{\lambda}{2}\right)}$ be a parameter as given above.
	Then there exists a constant $C(\epsilon) = \poly ( \frac{1}{\epsilon} )$ such that, for any time $t\geq 0$, we have $\E[ \Gamma^{seq}(t) ] \leq  \frac{4C(\epsilon)n}{\alpha\epsilon}$.
\end{theorem}

We would like to point out that the upper bound on $\alpha$ and the value of $\epsilon$ are chosen according
to the conditions required in~\cite{PTW15} and we will assume that they hold throughout this paper (Later on, we will assume even smaller upper bound on $\alpha$):
\begin{align} \label{alphaproperty}
    \alpha \le \frac{\epsilon}{6S} \text{ and }
   \alpha \le \frac{\lambda}{2}.
\end{align}
Our goal will be to prove similar theorem 
in the concurrent case.

Note that this implies that the maximum gap between the most loaded and the least loaded bin at a step is $2\frac{\log n}{\alpha}+
O\left(\frac{\log{\frac{1}{\alpha}}}{\alpha}\right)$ in expectation and with high probability in $n$ (As shown in \cite{PTW15}). 

The proof of the above theorem uses the following Lemma, which we also are going to rely on:

\begin{lemma} \label{lem:beta}
Let $\alpha$ and $\epsilon$ and $C(\epsilon)$ be the parameters defined in Theorem \ref{thm:beta}. Then for any step $t$:
\begin{align*}
\E[\Gamma^{seq}(t+1)|x(t)]\le \Big(1-\frac{\alpha\epsilon}{4n}\Big)\Gamma^{seq}(t)+C(\epsilon).
\end{align*}
\end{lemma}

\subsection{Naive Upper and Lower Bounds}
Let $\Gamma^{con}(t),\Phi^{con}(t)$ and $\Psi^{con}(t)$ be the potential functions in the concurrent case.
We start with proving the following result:

In this section we derive upper and lower bounds on $\Gamma^{con}$ per step.
These bounds just use the fact that for any bin $i$ the probability of incrementing it is at most $\frac{2}{n}$,
and this is true both for sequential and concurrent processes.

We assume that at step $t+1$, increment $w(t)$ satisfies conditions from Section 
\ref{sec:definitions}. We start with the upper bound:
\begin{lemma}
\label{lem:BadBoundUpper}
For any operation $op_t$
\begin{equation}
\E[\Gamma^{con}(t+1)|x(t)]\le \Bigg(1+
\frac{4 \alpha}{n}\Bigg)\Gamma^{con}(t).
\end{equation}
\end{lemma}

\begin{proof}
First we consider what is expected change in $\Phi^{con}$. Let $y_i=x_i(t)-\mu(t)$ and let $\Phi^{con}_i (t) = e^{\alpha y_i}$. Also, let $\Delta \Phi^{con}=\Phi^{con}(t+1)-\Phi^{con}(t)$ and 
$\Delta \Psi^{con}=\Psi^{con}(t+1)-\Psi^{con}(t)$
We have two cases to consider. If bin $i$ is chosen, then the change is:
\begin{align*}
\E&[\Delta \Phi^{con}_i|x(t)]  = \E[\Phi^{con}_i(t+1)|x(t)] - \Phi^{con}_i(t)  \\ 
& = \E\Bigg[\exp\Bigg(\alpha \Big( x_i(t)-\mu(t)+  (w(t) - \frac{w(t)}{n} ) \Big) \Bigg)\Bigg|x(t)\Bigg] - e^{\alpha y_i} \\ &= e^{\alpha y_i} 
\Bigg(\E\Big[\exp\Big(w(t) \alpha (1-\frac{1}{n})\Big)\Big]-1\Bigg) 
\\ &= e^{\alpha y_i} 
\Bigg(\E\Big[M\Big(\alpha(1-\frac{1}{n})\Big)\Big]-1\Bigg) 
\\ &\overset{(*)}{=}
e^{\alpha y_i} 
\Bigg(\E\Big[M(0)+M'(0)\alpha(1-\frac{1}{n})+M''(\xi)\alpha^2 (1-\frac{1}{n})^2\Big/2\Big]-1 \Bigg) \\&=
e^{\alpha y_i} 
\Bigg(\E\Big[1+w(t)\alpha(1-\frac{1}{n})+M''(\xi)\alpha^2 (1-\frac{1}{n})^2\Big/2\Big]-1 \Bigg)
\\&\overset{(\ref{weightproperty1}),(\ref{weightproperty2})}{\le} e^{\alpha y_i} 
\Bigg(\alpha(1-\frac{1}{n})+S\alpha^2(1-\frac{1}{n})^2\Bigg)
\end{align*}
Where in~$(*)$ we used the tailor expansion of $M(x)$  around $0$ and in the last step we used that 
$0 \le \xi \le \alpha(1-\frac{1}{n}) \le \frac{\lambda}{2}.$

Using  similar arguments we can prove that, when some other bin $i$ is not chosen:

\begin{equation*}
\Delta \Phi^{con}_i \le e^{\alpha y_i}\Bigg(-\frac{\alpha}{n}+S\frac{\alpha^2}{n^2}\Bigg) \le 0.
\end{equation*}

Let $p_i \le 2/n$ be the probability of bin $i$ being chosen for increment.
We get that:
\begin{align*}
	\E\Big [ \Delta \Phi^{con}_i | x(t)\Big] &\le p_i e^{\alpha y_i} \Bigg(\alpha(1-\frac{1}{n})+S\alpha^2(1-\frac{1}{n})^2\Bigg) \\&\le
	p_i (\alpha+S\alpha^2) \le \frac{4\alpha}{n} e^{\alpha y_i}.
\end{align*}
Hence:
\begin{equation}\label{PhiBound}
\E[\Delta \Phi^{con} |x(t)]=\sum_{i=1}^{n} \E[\Delta \Phi_i^{con} |x(t)] \le 
\frac{4\alpha}{n}\Phi^{con}(t).
\end{equation}
In a similar way, we can prove that:
\begin{align*}
\E[\Delta \Psi^{con} |x(t)] &\le \sum_{i=1}^{n} (1-p_i) (\frac{\alpha}{n}+\frac{S\alpha^2}{n^2})e^{-\alpha y_i}
\\& \le \sum_{i=1}^{n} (\frac{\alpha}{n}+\frac{S\alpha^2}{n^2})e^{-\alpha y_i} \le \frac{4 \alpha}{n} \Psi^{con}(t)
\end{align*}
Combining this with inequality (\ref{PhiBound}), and using the definitions of $\Delta \Phi^{con}$ and 
$\Delta \Psi^{con}$
gives us proof of the Lemma. 
\end{proof}

We proceed by showing the lower bound:

\begin{lemma}
\label{lem:BadBoundLower}
For any operation $op_t$
\begin{equation}
\E[\Gamma^{con}(t+1)|x(t)]\ge \Bigg(1-
\frac{2 \alpha}{n}\Bigg)\Gamma^{con}(t).
\end{equation}
\end{lemma}

\begin{proof}
First we consider what is expected change in $\Phi^{con}$. Let $y_i=x_i(t)-\mu(t)$ and let $\Phi^{con}_i (t) = e^{\alpha y_i}$. We have two cases here. If bin $i$ is chosen, then as in the previous lemma the change is:
\begin{align*}
\E&[\Delta \Phi^{con}_i|x(t)]  = \E[\Phi^{con}_i(t+1)|x(t)] - \Phi^{con}_i(t) \\&=
e^{\alpha y_i} 
\Bigg(\E\Big[1+w(t)\alpha(1-\frac{1}{n})+M''(\xi)\alpha^2 (1-\frac{1}{n})^2\Big/2\Big]-1 \Bigg)
\\&\ge e^{\alpha y_i} 
\alpha(1-\frac{1}{n}) \ge \frac{\alpha}{2} e^{\alpha y_i} \ge 0.
\end{align*}
Where in the last step we used that $n \ge 2$ and the fact that exponential function is non-negative.

Using  similar arguments we can prove that, when some other bin $i$ is not chosen:

\begin{equation*}
\Delta \Phi^{con}_i \ge -\frac{\alpha}{n} e^{\alpha y_i}.
\end{equation*}

Let $p_i \le 2/n$ be the probability of bin $i$ being chosen for increment
We get that:
\begin{align*}
	\E\Big [ \Delta \Phi^{con}_i | x(t)\Big] &\ge -(1-p_i) \frac{\alpha}{n} e^{\alpha y_i} \ge -\frac{\alpha}{n} e^{\alpha y_i}.
\end{align*}
Hence:
\begin{equation}\label{PhiBoundUpper}
\E[\Delta \Phi^{con} |x(t)]=\sum_{i=1}^{n} \E[\Delta \Phi_i^{con} |x(t)] \ge 
-\frac{\alpha}{n}\Phi^{con}(t).
\end{equation}
In a similar way, we can prove that:
\begin{align*}
\E[\Delta \Psi^{con} |x(t)] &\ge -\sum_{i=1}^n p_i \alpha (1-\frac{1}{n}) \ge -\frac{2 \alpha}{n} \Psi^{con}(t)
\end{align*}
Combining this with inequality (\ref{PhiBoundUpper}), and using the definitions of $\Delta \Phi^{con}_i$ and 
$\Delta \Psi^{con}_i$
gives us proof of the Lemma. 
\end{proof}

\subsection{Main Argument}

\noindent Now we consider $Cn$ ($C$ is a constant which we will fix later) consecutive operations and prove that at most $n$ of them can be bad:

\begin{lemma} \label{lem:ConsOps}
For any $t$, we have that $|t':t \le t' \le t+Cn-1,  \ell_{t'} > Cn| < n$. 
\end{lemma}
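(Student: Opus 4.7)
The plan is to prove the lemma by contradiction: suppose the window $W = \{op_t, op_{t+1}, \ldots, op_{t+Cn-1}\}$ contains at least $n$ operations $h_1, \ldots, h_n$ with $\ell_{h_k} > Cn$.

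First, for each high-contention operation $h_k$, I would argue that at least $(C-1)n + 1$ distinct operations have their update times strictly inside the interval $(s_{h_k}, f_{h_k})$. The argument partitions the more than $Cn$ operations overlapping $h_k$ into (A) those with smaller index (equivalently, earlier update time) whose update falls inside $(s_{h_k}, f_{h_k})$, and (B) those with larger index that are still running at time $f_{h_k}$. Since at most $n-1$ operations can be concurrently active besides $h_k$ at its last step, category (B) has at most $n-1$ members, leaving at least $(C-1)n + 1$ in category (A).

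Next, I would apply a double-counting argument. Summing the above lower bound over all $n$ high-contention operations gives at least $n((C-1)n + 1)$ pairs of the form (high-contention op, predecessor). However, each single operation $u$ can appear as a predecessor for at most $n-1$ distinct high-contention ops, because this requires $h_k$ to be active at the specific time $f_u$ and by the concurrency bound at most $n-1$ operations other than $u$ are active then. Dividing the two bounds yields at least $(C-1)n + 1$ \emph{distinct} predecessor operations, each having update time strictly greater than $T_1 := \min_k s_{h_k}$ and at most $\max_k f_{h_k} \le f_{t+Cn-1}$.

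The final step is to show that the available index range cannot accommodate this many distinct predecessors. Applying the Step~1 inequality to the specific high-contention operation $h_{k^*}$ achieving $s_{h_{k^*}} = T_1$ forces its ``predecessor gap'' to be large, which combined with $h_{k^*} \le t+Cn-1$ pins the index of the last operation to finish by time $T_1$ to be at most roughly $t + n - 2$. So the distinct predecessors must live in an index range of size close to $(C-1)n + 1$; moreover, at most $n-1$ of them can lie outside $W$ (such an operation would need to be concurrently active at $T_1$ alongside $h_{k^*}$, of which only $n-1$ can exist). The remaining predecessors are forced into the at most $(C-1)n$ non-high-contention slots of $W$, and a careful accounting makes this strictly too tight.

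The main obstacle I expect is the last step's bookkeeping: carefully handling the possibility that some high-contention ops act as predecessors for others, tracking which of the distinct predecessors sit inside versus just outside $W$, and applying the concurrency bound at time $T_1$ correctly to cap the pre-window contribution. Conceptually, though, the argument is clean: each high-contention op in $W$ ``consumes'' roughly $(C-1)n$ predecessor slots, and with only $Cn$ indices in $W$ and the concurrency bound permitting at most $n-1$ operations to leak outside the window, the window simply cannot support $n$ such simultaneous demands.
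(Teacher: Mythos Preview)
Your double-counting route is far more involved than necessary, and the final step does not close. The claim that ``at most $n-1$ of [the distinct predecessors] can lie outside $W$, since such an operation would need to be concurrently active at $T_1$ alongside $h_{k^*}$'' is not justified: a predecessor $u$ of some $h_k$ only needs $f_u \in (s_{h_k}, f_{h_k})$, so $u$ may start strictly after $T_1$ and still finish before $t$, without ever being active at time $T_1$. Once that cap on pre-window predecessors is removed, your lower bound on the number of distinct predecessors---which after dividing the pair count by $n-1$ is only about $(C-1)n + O(1)$---does not exceed the number of index slots available in $(T_1, t+Cn-1)$, and the contradiction does not materialize.

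The paper's proof is a two-line pigeonhole argument on \emph{threads}, exploiting the one structural fact you never invoke directly: operations executed by the same thread do not overlap. Assume for contradiction that the window $[t, t+Cn-1]$ contains at least $n$ bad operations. Since there are only $n$ threads, two of them, say $op_i$ and $op_j$ with $t \le i < j \le t+Cn-1$, are executed by the same thread. Then $op_j$ can begin only after $op_i$ completes, so $s_j > f_i = i \ge t$, and every operation that completes during $op_j$'s lifespan has index in $(i, j)$. Hence $\ell_j \le j - i  < Cn$, contradicting $\ell_j > Cn$.

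The conceptual miss is that you reason purely through the ``at most $n$ operations active at once'' bound and try to extract a contradiction by counting predecessor slots, when the real content of the lemma is per-thread sequentiality: a single thread cannot place two operations with contention exceeding $Cn$ into the same length-$Cn$ window, because the later one would have its entire lifespan contained in the window. That observation makes the lemma immediate and renders the whole predecessor-counting machinery unnecessary.
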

\begin{proof}
We argue by contradiction. Let us assume that the number of bad operations is at least $n$. By the pigeonhole principle, there exist bad  operations $op_i$ and $op_j$, $t \le i < j \le t+Cn-1$, which are performed by the same thread. This means that since these operations are not concurrent, we have that $s_j > f_i=i$. Thus, we get a contradiction: $Cn \le \ell_j = |t': s_j \le t' < f_j=j| \le j-i < Cn$.
\end{proof}

We call operation $op_t$ \textbf{good} if $\ell_t \le Cn$, otherwise we call it \textbf{bad}.
For each bin $i$, and step $t \ge Cn$, let $H_i(t)$ be the number of times $i$ was chosen by operations $op_{t-Cn+1}$, $op_{t-Cn+1}$, ...
$op_{t}$. In this case, if some operations chooses $i$ and $j$, we count both as chosen and we also say that $i$ was chosen twice if $i=j$. 
Observe that if $op_{t+1}$ is good, then $H_i(t)$ is the upper bound on the number of increments bin $i$
receives during the entire run of operation $op_{t+1}$ (excluding the increment which might be performed by $op_{t+1}$).
Also, let $H_{max}(t)=max\{H_1(t), H_2(t), ..., H_n(t)\}$.
Now we are ready to bound the potential.

\paragraph{Upper Bound on a Potential for Counters}

We concentrate on the case when $w(t)=1$, for any $t$ (The case with counters).

\begin{lemma} \label{lem:gammmadifference}
For any \textbf{good} operation $op_{t+1}$, such that $t \ge Cn$:
\begin{align*}
\E[\Gamma^{con}(t+1)|x(t),H_{max}(t)]\le \Big(1-\frac{\alpha\epsilon}{4n}\Big)\Gamma^{con}(t)&+C(\epsilon)
\\&+\frac{4\alpha \Gamma^{con}(t)}{n} (e^{\alpha H_{max}(t)}-1).
\end{align*}
\end{lemma}

\begin{proof}
Since we condition on $x(t)$, we can assume that potentials at step $t$ are the same both for sequential and concurrent processes (This is not true 
for the next step since  since processes can increment different bins).
We have that:
\begin{align*}
\E&[\Gamma^{con}(t+1)|x(t),H_{max}(t)]\\&=\E[\Gamma^{seq}(t+1)|x(t),H_{max}(t)]+\E[\Gamma^{con}(t+1)-\Gamma^{seq}(t+1)|x(t),H_{max}(t)]
\\&\overset{\text{Lemma }\ref{lem:beta}}{\le} \Big(1-\frac{\alpha\epsilon}{4n}\Big)\Gamma^{seq}(t)+C(\epsilon)
\\&\quad\quad\quad\quad\quad\quad\quad\quad\quad\quad\quad\quad+
\E[\Gamma^{seq}(t+1)-\Gamma^{con}(t+1)|x(t),H_{max}(t)].
\end{align*}
Hence our goal is to upper bound  $\E[\Gamma^{seq}(t+1)-\Gamma^{con}(t+1)|x(t),H_{max}(t)]$.
w.l.o.g we assume that $x_1(t) \le x_2(t) ... \le x_n(t)$.
We couple sequential and concurrent processes so that the bin choices $i$ and $j$ are the same in both cases.
Let $i \le j$, then sequential process always increments bin $i$, but for the concurrent process it depends 
on when its reads occurred, for example it can be that during reads the load of bin $j$ was smaller than
load of bin $i$ but then the increments done by concurrent processes reversed the order.
The crucial thing is that in this case $x_j(t)-x_i(t) \le H_{max}(t)$.
Hence, assuming the worst case (concurrent process increments bin $j$) we have that 
\begin{align*}
\Phi^{con}(t+1)-\Phi^{seq}(t+1)&=e^{\alpha(x_j(t)-\mu(t)+1-\frac{1}{n})}+e^{\alpha(x_i(t)-\mu(t)-\frac{1}{n})} 
\\&\quad\quad\quad-
e^{\alpha(x_i(t)-\mu(t)+1-\frac{1}{n})}-e^{\alpha(x_j(t)-\mu(t)-\frac{1}{n})}  \\&=
e^{\alpha(x_i(t)-\mu(t))}e^{-\frac{\alpha}{n}}(e^{\alpha}-1)(e^{\alpha(x_j(t)-x_i(t))}-1) \\&\le
2\alpha e^{\alpha(x_i(t)-\mu(t))} (e^{\alpha H_{max}(t)}-1).
\end{align*}
Where in the last step we used that $e^{\alpha} \le 1+2\alpha$, since $\alpha \le \frac{1}{2}$.
Also, 
\begin{align*}
\Psi^{con}(t+1)-\Psi^{seq}(t+1)&=e^{-\alpha(x_j(t)-\mu(t)+1-\frac{1}{n})}+e^{-\alpha(x_i(t)-\mu(t)-\frac{1}{n})} 
\\&\quad\quad\quad-
e^{-\alpha(x_i(t)-\mu(t)+1-\frac{1}{n})}-e^{-\alpha(x_j(t)-\mu(t)-\frac{1}{n})} \\&=
e^{-\alpha(x_i(t)-\mu(t))}e^{\frac{\alpha}{n}}(e^{-\alpha}-1)(e^{-\alpha(x_j(t)-x_i(t))}-1) \\&=
e^{-\alpha(x_i(t)-\mu(t))}e^{\frac{\alpha}{n}}(1-e^{-\alpha})(1-e^{-\alpha(x_j(t)-x_i(t))}) \\&\le
2\alpha e^{-\alpha(x_i(t)-\mu(t))}(1-e^{-\alpha H_{max}(t)}).
\end{align*}
Where in the last step we used that $e^{\frac{\alpha}{n}} \le 2$, since $\alpha \le \frac{1}{2}$.
The above bounds no longer depend on $j$ and for any bin $i$ the probability of being one out of two random choices 
of $op_t$ is at most $\frac{2}{n}$, hence:
\begin{align*}
\E&[\Gamma^{seq}(t+1)-\Gamma^{con}(t+1)|x(t)] \\&\le \sum_{i=1}^n \frac{4\alpha}{n} e^{\alpha(x_i(t)-\mu(t))} (e^{\alpha H_{max}(t)}-1) 
+
\sum_{i=1}^n \frac{4\alpha}{n} e^{-\alpha(x_i(t)-\mu(t))} (1-e^{-\alpha H_{max}(t)}) \\&=
\frac{4\alpha \Phi^{con}(t)}{n} (e^{\alpha H_{max}(t)}-1) 
+
\frac{4\alpha \Psi^{con}(t)}{n} (1-e^{-\alpha H_{max}(t)})
\\&\le \frac{4\alpha \Gamma^{con}(t)}{n} (e^{\alpha H_{max}(t)}-1).
\end{align*}
Where in the last step we used that $e^{\alpha H_{max}(t)}+e^{-\alpha H_{max}(t)} \ge 2$.
\end{proof}

Let $N=\lfloor\frac{2Cn}{2e^3C \log{n}} \rfloor$ and for $0 \le K \le N$, let $A_K(t)$ be the event that 
$2e^3C K \log{n} \le H_{max}(t) < 2e^3C (K+1) \log{n}$. We proceed by proving the following lemma:

\begin{lemma} \label{lem:refinedgammadifference}
For any \textbf{good} operation $op_{t+1}$, such that $t \ge Cn$:
\begin{align*}
\E[\Gamma^{con}(t+1)]&\le \Big(1-\frac{\alpha\epsilon}{4n}\Big)\E[\Gamma^{con}(t)]+C(\epsilon)
\\&+\sum_{K=0}^N
\frac{4\alpha \E[\Gamma^{con}(t)|A_K(t)]Pr[A_K(t)]}{n} (e^{2\alpha e^3C (K+1)
\log{n}}-1).
\end{align*}
\end{lemma}

\begin{proof}
First we remove conditioning on $x(t)$:
\begin{align*}
\E&[\Gamma^{con}(t+1)|H_{max}(t)]=\E_{x(t)|H_{max}(t)}[\E[\Gamma^{con}(t+1)|x(t),H_{max}(t)]] \\ &\le
\Big(1-\frac{\alpha\epsilon}{4n}\Big)\E[\Gamma^{con}(t)|H_{max}(t)]+C(\epsilon)
\\&\quad\quad\quad\quad\quad\quad\quad\quad\quad\quad\quad\quad\quad\quad+\frac{4\alpha \E[\Gamma^{con}(t)|H_{max}(t)]}{n} (e^{\alpha H_{max}(t)}-1).
\end{align*}
Next, we remove conditioning on $H_{max}(t)$ from the left side of the above inequality. Using Lemma \ref{lem:gammmadifference} we get that:
\begin{align*}
\E&[\Gamma^{con}(t+1)]=\sum_{K=0}^{N-1} \E[\Gamma^{con}(t+1)|A_K(t)]Pr[A_K(t)] \\ &\le \sum_{K=0}^N \Bigg(
\Big(1-\frac{\alpha\epsilon}{4n}\Big)\E[\Gamma^{con}(t)|A_K(t)]Pr[A_K(t)]+C(\epsilon)Pr[A_K(t)]
\\&\quad\quad\quad\quad\quad\quad\quad\quad\quad+\frac{4\alpha \E[\Gamma^{con}(t)|A_K(t)]Pr[A_K(t)]}{n} (e^{2\alpha e^3C (K+1)
\log{n}}-1)\Bigg)
\\ &= \Big(1-\frac{\alpha\epsilon}{4n}\Big)\E[\Gamma^{con}(t)]+C(\epsilon) \\&\quad\quad\quad\quad\quad\quad\quad+\sum_{K=0}^N
\frac{4\alpha \E[\Gamma^{con}(t)|A_K(t)]Pr[A_K(t)]}{n} (e^{2\alpha e^3C (K+1)
\log{n}}-1).
\end{align*}
\end{proof}

Our next goal is to upper bound $Pr[A_K(t)]$ and $\E[\Gamma^{con}(t)|A_K(t)]$
For this start with deriving the concentration bounds for $H_{max}(t)$.

\begin{lemma} \label{lem:Chernoff}
For any $t > Cn$ and constant $K \ge 1$:
\begin{align*}
Pr[A_K(t)]\le Pr[H_{max}(t) \ge 2K e^3C \log{n}] \le  \frac{1}{(eK \log{n})^{2 K C e^3 \log{n}}}.
\end{align*}
\end{lemma}

\begin{proof}
Note that $H_{max}(t)$ is a maximum number of balls some bin receives if we throw $2Cn$ balls into $n$ initially empty bins (Recall that all the random choices which operations make are independent). 
For a fixed bin $i$, let $H_i(t)$ be the number of balls it receives.
We know that $\E[H_i(t)]=2C$.
Hence, using Chernoff's inequality we get that 
\begin{align*}
    Pr [H_i(t) \ge 2K e^3 C \log{n}] &\le \Bigg(\frac{e^{K e^3 \log{n} - 1}}{(K e^3 \log{n})^{K e^3 \log{n}}}\Bigg)^{2C}
    \\&\le \frac{1}{n} \frac{1}{(e K \log{n})^{2K C e^3 \log{n}}}.
\end{align*}
By union bounding over $n$ bins we get the proof of the lemma.
\end{proof}

We proceed by upper bounding $\E[\Gamma^{con}(t)|A_K(t)]$.

\begin{lemma} \label{lem:lowerboundcond}
For any $t \ge Cn$:
\begin{align*}
\E[\Gamma^{con}(t)|A_K(t)] \le \E[\Gamma^{con}(t)]e^{3\alpha e^3C (K+1) \log{n}}.
\end{align*}
\end{lemma}
\begin{proof}
\begin{align*}
\E&[\Phi^{con}(t)|A_K(t),x(t-Cn)]-\Phi^{con}(t-Cn)\\&=
\sum_{i=1}^n e^{\alpha (x_i(t-Cn)-\mu(t-Cn))}\Bigg(e^{\alpha\Big(x_i(t)-\mu(t)-x_i(t-Cn)+\mu(t-Cn)\Big)}-1\Bigg).
\end{align*}
Since we condition on $A_k(t)$, for every $i$ we have that 
\begin{align*}
x_i(t)-x_i(t-Cn) \le H_{max}(t) \le 2e^3C (K+1) \log{n}.
\end{align*}
Also, $\mu(t) > \mu(t-Cn)$.
Thus
\begin{align*}
\E&[\Phi^{con}(t)|A_K(t),x(t-Cn)]-\Phi^{con}(t-Cn) \\&\le
\sum_{i=1}^n e^{\alpha (x_i(t-Cn)-\mu(t-Cn))}\Bigg(e^{2\alpha e^3C (K+1) \log{n}}-1\Bigg)
\\&=\Phi^{con}(t-Cn)\Bigg(e^{2\alpha e^3C (K+1) \log{n}}-1\Bigg).
\end{align*}
Similarly
\begin{align*}
\E&[\Psi^{con}(t)|A_K(t),x(t-Cn)]-\Psi^{con}(t-Cn)\\&=
\sum_{i=1}^n e^{-\alpha (x_i(t-Cn)-\mu(t-Cn))}\Bigg(e^{-\alpha\Big(x_i(t)-\mu(t)-x_i(t-Cn)+\mu(t-Cn)\Big)}-1\Bigg).
\end{align*}
We have that $x_i(t) \ge x_i(t-Cn)$ and 
\begin{align*}
\mu(t) - \mu(t-Cn) \le H_{max}(t) \le 2e^3C (K+1) \log{n}
\end{align*}
Thus
\begin{align*}
\E&[\Phi^{con}(t)|A_K(t),x(t-Cn)]-\Phi^{con}(t-Cn) \\&\le
\sum_{i=1}^n e^{-\alpha (x_i(t-Cn)-\mu(t-Cn))}\Bigg(e^{2\alpha e^3C (K+1) \log{n}}-1\Bigg)
\\&=\Psi^{con}(t-Cn) \Bigg(e^{2\alpha e^3C (K+1) \log{n}}-1\Bigg).
\end{align*}
Hence
\begin{align*}
\E&[\Gamma^{con}(t)|A_K(t),x(t-Cn)]-\Gamma^{con}(t-Cn)] \\&\le \Gamma^{con}(t-Cn)\Bigg(e^{2\alpha e^3C (K+1) \log{n}}-1\Bigg)
\end{align*}
Notice that $A_K(t)$ is independent of $x(t-Cn)$, since in the definition of $H_{max}(t)$ we just consider random choices made by $op_{t-Cn+1},...,op_t$.
This allows us to remove conditioning on $x(t-Cn)$ and after regrouping the terms in the above inequality we get
\begin{align} \label{eqn:condgammalower}
\E&[\Gamma^{con}(t)|A_K(t)] \le \E[\Gamma^{con}(t-Cn)]e^{2\alpha e^3C (K+1) \log{n}}
\end{align}
By applying Lemma \ref{lem:BadBoundLower} $Cn$ times we get that 
\begin{align*}
\E[\Gamma^{con}(t)|x(t-Cn)] \ge \Gamma^{con}(t-Cn)\Bigg(1-
\frac{2 \alpha}{n}\Bigg)^{Cn} \ge \Gamma^{con}(t-Cn)e^{-4C\alpha}.
\end{align*}
After removing conditioning we get that 
\begin{align*}
\E[\Gamma^{con}(t)] \ge \E[\Gamma^{con}(t-Cn)]e^{-4C\alpha}.
\end{align*}
By combining the above inequality with (\ref{eqn:condgammalower}) we get that:
\begin{align*}
\E[\Gamma^{con}(t)|A_K(t)] &\le \E[\Gamma^{con}(t)]e^{2\alpha e^3C (K+1) \log{n}}e^{4\alpha C}
\\&\le \E[\Gamma^{con}(t)]e^{3\alpha e^3C (K+1) \log{n}}.
\end{align*}
\end{proof}

Finally
\begin{lemma} \label{lem:goodbound}
For any good operation $op_{t+1}$, such that $t \ge Cn$, we have that if $C \ge 2$ and $\alpha\le \frac{1}{4096 C e^3\log n}$ then
\begin{align*}
\E[\Gamma^{con}(t+1)]&\le \Big(1-\frac{\alpha\epsilon}{8n}\Big)\E[\Gamma^{con}(t)]+C(\epsilon).
\end{align*}
\end{lemma}

\begin{proof}
Since $\alpha \le \frac{1}{4096 C e^3\log n}$ and $C \ge 2$:
\begin{align} \label{eqn:needtomodify}
\sum_{K=1}^N
&\E[\Gamma^{con}(t)|A_K(t)]Pr[A_K(t)] (e^{2\alpha e^3C (K+1)\log{n}}-1)
\nonumber \\&\overset{\text{Lemmas \ref{lem:Chernoff} and \ref{lem:lowerboundcond}}}{\le} \sum_{K=1}^N  \frac{\E[\Gamma^{con}(t)]e^{5\alpha e^3C (K+1) \log{n}}}
{(eK \log{n})^{2 K C e^3 \log{n}}} \nonumber \\ &\le \sum_{K=1}^N  \frac{\E[\Gamma^{con}(t)]e^{e^3C K \log{n}}}
{e^{2 K C e^3 \log{n}}} \le \sum_{K=1}^N  \frac{\E[\Gamma^{con}(t)]}
{e^{2K e^3 \log{n}}} \le \sum_{K=1}^{\infty}  \frac{\E[\Gamma^{con}(t)]}
{n^{16K}} \nonumber \\&\le \frac{2\E[\Gamma^{con}(t)]}{n^{16}} \le \frac{\E[\Gamma^{con}(t)]}{2048}.
\end{align}
Also, for $K=0$
\begin{align*}
\E&[\Gamma^{con}(t)|A_0(t)]Pr[A_0(t)] (e^{2\alpha e^3C \log{n}}-1)
\\&\overset{\text{Lemma \ref{lem:lowerboundcond}}}{\le} 
\E[\Gamma^{con}(t)]e^{3\alpha e^3C \log{n}}(e^{2\alpha e^3C \log{n}}-1) \\
&\le\E[\Gamma^{con}(t)]e^{\frac{3}{4096}}(e^{\frac{1}{2048}}-1) \\&\le 2\E[\Gamma^{con}(t)] \frac{1}{1024} = \frac{\E[\Gamma^{con}(t)]}{512}.
\end{align*}
Hence, we get that 
\begin{align*}
\sum_{K=0}^N
\E&[\Gamma^{con}(t)|A_K(t)]Pr[A_K(t)] (e^{2\alpha e^3C (K+1)\log{n}}-1)
\\ &\le \frac{\E[\Gamma^{con}(t)]}{2048}+\frac{\E[\Gamma^{con}(t)]}{512} =\frac{5\E[\Gamma^{con}(t)]}{2048}.
\end{align*}
By plugging the above inequality in Lemma \ref{lem:refinedgammadifference} we get that 
\begin{align*}
\E[\Gamma^{con}(t+1)]&\le \Big(1-\frac{\alpha\epsilon}{4n}\Big)\E[\Gamma^{con}(t)]+C(\epsilon)+
\frac{20\alpha \E[\Gamma^{con}(t)]}{2048}.
\end{align*}
Recall that $\epsilon=\frac{1}{12}$, thus $\frac{20}{2048} \le \frac{1}{8\epsilon}$ and this finishes the proof of the lemma.
\end{proof}

\paragraph{Endgame.}
With all this machinery in place, we proceed to prove the following. 
\begin{lemma}
\label{lem:gamma-bound}
	If $\alpha \le \frac{1}{4096 C e^3\log n}$ and $C \ge 433$, then at any time step $t$
	$$\E[\Gamma^{con}(t)] \leq \frac{146 C(\epsilon)n}{\alpha \epsilon}.$$
\end{lemma}
\begin{proof}
	We will proceed by induction on $t$. We will first prove that, if $\E[\Gamma^{con}(t)] \leq \frac{146 C(\epsilon)n}{\alpha \epsilon}$ for  $t \ge Cn$, then $\E[\Gamma^{con}(t + C n) | \Gamma^{con}(t)] \leq  \frac{146 C(\epsilon)n}{\alpha \epsilon}$. 
	
	We have two cases. 
	The first is if there exists a time $\tau \in [t, t + Cn]$ such that $\E[\Gamma^{con}(\tau)] \leq \frac{72 C(\epsilon)n}{\alpha \epsilon}$. 
	Let us now focus on bounding the maximum expected value of $\Gamma^{con}(t + Cn)$ in this case. 
	First notice that the maximum expected increase of $\Gamma^{con}$ because of a good step is an additive $C(\epsilon)$ factor. By Lemma $\ref{lem:BadBoundUpper}$ The  expected value of $\Gamma^{con}$ after a bad operation is upper bounded a multiplicative $(1+\frac{4\alpha}{n})$ factor. Hence, by Lemma~\ref{lem:ConsOps} and  expected maximum value of $\Gamma^{con}$ at $t + Cn$ is at most 
\begin{align*}
	\left( \frac{72C(\epsilon)n}{\alpha \epsilon} + C(\epsilon) (C - 1) n \right) \left(1+
\frac{4\alpha}{n}\right)^n &\leq \left( \frac{72C(\epsilon)n}{\alpha \epsilon} + \frac{C(\epsilon) n}{4096 \alpha e^3\log n}\right)e^{4\alpha}
\\& \le \frac{146 C(\epsilon)}{\alpha \epsilon}.
\end{align*}

The second case is if there exists no such time in $[t, t + Cn]$, meaning that $\E[\Gamma^{con}(\tau)] > \frac{72 C(\epsilon)n}{\alpha \epsilon}, \forall \tau \in [t, t + Cn].$ Then, by Lemma~\ref{lem:goodbound}, we have that, at each good step, 
\begin{equation}
    \E[\Gamma^{con}(t+1)]\le \E[\Gamma^{con}(t)]\Big(1-\frac{\alpha \epsilon}{9n}\Big).
\end{equation}

\noindent Hence, we can expand the recursion to upper bound the change in $\Gamma^{con}$ between $t$ and $t + Cn$ as
\begin{align*}
    \E[\Gamma^{con}(t+Cn)] &\le \E[\Gamma^{con}(t)] \Big(1-\frac{\alpha \epsilon}{9n}\Big)^{(C-1)n}\left(1+
\frac{4\alpha}{n}\right)^n \\&\le \E[\Gamma^{con}(t)] e^{-\frac{\alpha \epsilon (C-1)}{9}+4\alpha} \le \E[\Gamma^{con}(t)].
\end{align*}
Where in the last step we used that $C \ge 1+36/\epsilon=433$.

To establish the base of induction note that by Lemma $\ref{lem:BadBoundUpper}$, for each $0 \le t \le 2Cn$:
\begin{align*}
\Gamma^{con}(t) &\le \Gamma^{con}(0) (1+\frac{4\alpha}{n})^{2Cn}=2n (1+\frac{4\alpha}{n})^{2Cn}
\\&\le 2ne^{8\alpha C} \le 4n \le \frac{146 C(\epsilon)n}{\alpha \epsilon}.
\end{align*}

\noindent This concludes the proof of the Lemma. 
\end{proof}

\noindent The following claim completes the proof of Theorem~\ref{thm:main}.  
\begin{lemma}
\label{lem:main}
	 Given an oblivious adversary, $n$ distributed counters and $n$ threads, for any time $t$ in the execution of the approximate counter algorithm the counter returns a value that is at most $O( n \log^2 n)$ away from the number of increment operations which completed up to time $t$, in expectation.
	Moreover, for any $t$ and all $R$ sufficiently large, we have
	\[
	\Pr \left[ \exists i: \left| n \cdot x_i (t) - n \cdot \mu_i (t) \right| > R n \log^2 n \right] \leq n^{-\Omega(R)} \; .
	\]
\end{lemma}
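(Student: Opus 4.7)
The plan is to derive the lemma directly from the potential bound $\E[\Gamma(t)] = O(m)$ that Lemma~\ref{lem:gamma-bound} already establishes, via a standard exponential-potential-to-tail translation. Recall that a read returns $m \cdot x_j(t)$ for a uniformly random bin $j$, while the true total of increments that have been applied up to time $t$ equals $\sum_i x_i(t) = m \cdot \mu(t)$. Hence the deviation we must control is exactly $m \cdot |y_j(t)|$, and it suffices to bound $\max_j |y_j(t)|$ in expectation and in the tail. The starting observation is the pointwise inequality
$$ e^{\alpha |y_j(t)|} \;\leq\; e^{\alpha y_j(t)} + e^{-\alpha y_j(t)} \;\leq\; \Gamma(t) \qquad \text{for every bin } j,$$
which gives $\max_j |y_j(t)| \leq (1/\alpha)\log \Gamma(t)$ deterministically, with $\alpha$ a fixed positive constant under the parameter choices of the previous subsection.

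For the expectation bound, I would apply Jensen's inequality to the concave function $\log$ to obtain
$$ \E\!\left[\max_j |y_j(t)|\right] \;\leq\; \frac{1}{\alpha}\log \E[\Gamma(t)] \;=\; O(\log m),$$
using that $\E[\Gamma(t)] = O(m)$. Multiplying through by the factor $m$ applied by the read then yields the claimed $O(m \log m)$ bound on the expected deviation of $m \cdot x_j(t)$ from the true count $m \cdot \mu(t)$.

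For the high-probability tail, Markov's inequality applied to $\Gamma(t)$ suffices: the event $\{\exists j : |y_j(t)| > R\log m\}$ forces $\Gamma(t) > e^{\alpha R \log m} = m^{\alpha R}$, and so
$$ \Pr\!\left[\exists j : |y_j(t)| > R\log m\right] \;\leq\; \frac{\E[\Gamma(t)]}{m^{\alpha R}} \;=\; O\!\left(m^{1-\alpha R}\right) \;=\; m^{-\Omega(R)},$$
once $R$ is a sufficiently large constant multiple of $1/\alpha$. Rescaling the threshold by the multiplicative factor $m$ coming from the read operation recovers the statement in the exact form written in the lemma.

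Essentially all of the conceptual work for this lemma has already been absorbed into Lemma~\ref{lem:gamma-bound}, so there is no real main obstacle here; the only small care required is to verify that $\alpha$ is a constant independent of $m$ and $n$. This is indeed the case, since $\alpha = \min(\lambda/2,\,\epsilon/(6S))$ depends only on the fixed $\gamma = 1/5$ supplied by Lemma~\ref{claim:PrNoHit} via $\epsilon = \gamma/6$, which ensures that the hidden constants in both the $O(\log m)$ expectation bound and the $m^{-\Omega(R)}$ tail bound do not secretly degrade with the problem size.
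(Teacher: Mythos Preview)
Your proposal is correct and follows essentially the same approach as the paper: both derive the expectation bound via Jensen applied to the exponential potential and the tail bound via Markov on $\Gamma(t)$, using $\E[\Gamma(t)]=O(m)$ from Lemma~\ref{lem:gamma-bound}. The only cosmetic difference is that the paper treats the upper and lower deviations separately via $\Phi$ and $\Psi$ and then takes a union bound, whereas you handle both directions at once through the inequality $e^{\alpha|y_j|}\le e^{\alpha y_j}+e^{-\alpha y_j}\le\Gamma(t)$; this is a minor streamlining, not a different argument.
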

\begin{proof}
	The proof is similar to~\cite{PTW15} (the main difficulty was to reach asymptotically the same potential upper bound). 
	We aim to bound $\id{Gap}(t)$, the maximum gap between the weight of two bins at a step. 

By choosing $C=433$  and $\alpha=\frac{1}{4096 C e^3\log n}=\Theta(\frac{1}{\log n})$ and applying Lemma \ref{lem:gamma-bound} we get that $\E [\Phi^{con} (t)] = O(n \log n)$ and $\E [\Psi^{con} (t)] = O(n \log n)$ for all $t$.
Let $x_{max}(t)$ denote the maximum weight of any bin at time $t$, and let $x_{min}(t)$ be the minimum weight of any bin.
Then, we have
\begin{align*}
\alpha \E [x_{max}(t) - \mu (t)] &= \log \exp \left( \E [ \alpha (x_{max}(t) - \mu(t)) ] \right) \\
&\stackrel{(a)}{\leq} \log \E [\exp (\alpha (x_{max}(t) - \mu(t)))] \\
&\stackrel{(b)}{\leq} \log \E [\Phi^{con} (t)] \leq O(\log n + \log\log n)= O(\log n)\; ,
\end{align*}
where (a) follows from Jensen's inequality, and (b) follows from the definition of $\Phi^{con}$.
Similarly, we have $\E[\mu(t) - x_{min}(t)] \leq O(\log^2 n)$.
Since the true value of the counter at time $t$ is $n \cdot \mu(t)$, these two statements imply that for all $i$, we have $\E [|n \cdot x_i (t) - n \cdot \mu(t)|] \leq O(n \log^2 n)$, as desired.

We now prove the high probability bound.
Observe that if $\max (t) - \mu (t) > R \log^2 n$, then we have $\Gamma^{con} (t) \geq \Phi^{con} (t) \geq e^{\alpha R \log^2 n}$.
Hence, for large enough $R$:
\begin{align*}
\Pr [\max (t) - \mu (t) > R \log^2 n] &\leq \Pr [\Phi^{con} (t) \geq e^{\alpha R \log^2 n}] \\
&\overset{Markov}{\leq} \frac{O(n \log n)}{e^{\alpha R \log^2 n}} \\
&\leq n^{-O(R)} \; .
\end{align*}
Similarly,
$\Pr [\mu (t) - \min (t) > n \log n] \leq n^{-\Omega (R)}.$

Combining these two guarantees with a union bound immediately yields the desired guarantee.
\end{proof}

\section{Distributional Linearizability for Concurrent Relaxed Queues} 

We now extend the analysis in the previous section to imply distributional linearizability guarantees in concurrent executions for a variant of the MultiQueue process analyzed by~\cite{AKLN17}. 
This process is presented in Algorithm 2. We note that this process applies specifically to implement general concurrent \emph{queues}, and will also apply to \emph{priority queues} assuming that a sufficiently large buffer of elements always exists in the queues such that no insertion is ever performed on an element of \emph{higher} priority than an element which has already been removed. 

\subsection{Application to Concurrent Relaxed Queues}
\label{sec:queues}

\paragraph{Description.} 
We wish to implement a concurrent data structure with queue like semantics, so that we have guarantees on the rank of dequeued elements.
We assume we are given a set of $n$ linearizable priority queues such that each supports $\lit{Add} (e, p)$, $\lit{DeleteMin}$, $\lit{ReadMin}$, where $p$ is the priority of the element, and $\lit{ReadMin}$ returns the element with smallest priority in the priority queue, but does not remove it.
We also assume that each processor $i$ has access to a clock $\id{Clock}_i$ which gives an absolute time, and which are consistent amongst all the processors, that is, if processor $i$ reads $\id{Clock}_i$ in the linearization before processor $j$ reads $\id{Clock}_j$, then processor $i$'s value is smaller.
Such an assumption is realistic; recent Intel processors support the RDTSC hardware operation, which provides this functionality for cores on the same socket.

The procedure, given formally in Algorithm \ref{algo:approx-queue}, is similar to our approximate counter. 
To enqueue, a thread reads the wall clock, chooses a random priority queue, and adds the element to that priority queue with priority given by the time. 
To dequeue, we choose two random priority queues, find the one having a higher priority element on top, and delete from that priority queue. In case two processes enqueue to the same priority queue concurrently, their clock values will ensure a consistent ordering, handled by the internal implementation of the priority queues. 

\begin{algorithm}[ht]
\caption{Pseudocode for Relaxed Queue Algorithm.}
 \label{algo:approx-queue}
\begin{algorithmic}
\STATE \textbf{Shared}: $\id{PQs}[n]$ \textit{// Set of $n$ distinct priority queues}
\STATE \textbf{individual}: $\id{Clock}_i$ \textit{// A wall clock for processor $i$, for each $i$}

	\STATE \textbf{function} \lit{Enqueue}($e$)
	\STATE $p \gets \lit{Clock}_i.\lit{Read}()$
	\STATE $i \gets \lit{random}(1, n)$
	\STATE  $\id{PQs}[i].\lit{Add}(e, p)$
	\STATE
	\STATE \textbf{function} \lit{Dequeue}( )
	\STATE $i \gets \lit{random}(1, n)$
	\STATE $j \gets \lit{random}(1, n)$
	\STATE $(e_i, p_i) \gets \id{PQs}[i].\lit{ReadMin}()$
	\STATE $(e_j, p_j) \gets \id{PQs}[j].\lit{ReadMin}()$
	\STATE \lit{if} ( $ p_i > p_j$ ): $i = j$
	\STATE \textbf{return} $\id{PQs}[i].\lit{DeleteMin}()$
	
\end{algorithmic}
\end{algorithm}

\paragraph{Analysis.} 
The Analysis mostly follows the steps in \cite{AKLN17}.
We define the rank of element with timestamp $p$ as the number of elements which are currently in the system and 
have timestamp with value at most $p$ (Including itself, and assuming that no two operations have the same timestamp).

First we assume that Dequeues operations never see an empty queue. Given this assumption we can also assume:
\begin{itemize}
    \item Enqueue operations happen sequentially, sorted by linearization order
    \item Dequeue operations are invoked after all Enqueue operations are finished
\end{itemize}
Since the timestamps are increasing in linearization order, the two assumptions above do not change the outcome (the rank of returned element) of Dequeue operations and are needed solely for the purpose of analysis. 

We proceed by defining the auxiliary exponential label process. 
We are given $n$, initially empty queues in which we insert infinitely many labels as follows: for each queue $i$, if the 
last inserted label in $i$ is $v_i$ ($0$ if the queue is empty), then we insert label $v_i+Exp(\frac{1}{n})$ in it.
We define the rank of label $v$ as the number of labels which are currently in the queues and have value at most $v$ (Since exponential distribution is continuous we assume that no labels have the same value).
We will call these queues label queues to distinguish them from queues we use in Algorithm \ref{algo:approx-queue},

Theorem 2 in \cite{AKLN17} says that for any rank $r$ and queue $i$, probability of label with rank $r$ being 
in queue $i$ is $\frac{1}{n}$. The proof uses the memorylessness of exponential distribution.
Since for each queue the probability of element $e$ with initial (before $Dequeue()$ operations occur) rank $r$ 
being enqueued in it is $\frac{1}{n}$, via coupling we can assume that $e$ is enqueued in the queue
$i$, if the label queue $i$ contains the label with rank $r$. Then, we remove all the extra labels from label queues,
that is if the element with rank $r$ does not exist in the queues, then we remove the label with rank $r$ from the label queues as well. Next, for each $Dequeue()$ operation which chooses queues $i$ and $j$ uniformly at random and proceeds to dequeue from the queue which has the element
with the smaller rank (timestamp) on top, we also check the labels on top of label queues $i$ and $j$ and remove the smaller one.
Notice that this way, at any point in time, if the element with current rank $r$ is in queue $i$, then the the label with current rank $r$ is in label queue $i$ as well, and vice versa. This can be formally proved by induction on $Dequeue()$ operations.
Here, we switch gears and concentrate on proving rank bounds on the process with labels.
The process can be formulated as follows. Let $v_1(t),v_2(t),...,v_n(t)$ be the labels on top of the label queues after $t$ dequeues have occurred. Initially, we have that $v_i(t)=0$, for each $1 \le i \le n$. Then at each step $t+1$, we pick two queues $i$ and $j$
uniformly at random and if w.l.o.g queues $i$ has the smaller label on top, then $v_i(t+1)=v_i(t)+Exp(1/n)$ (for every 
$k \neq i$, we have that $v_k(t+1)=v_k(t)$. Notice the similarity between this process and Algorithm 1.
Let $x_i(t)=\frac{v_i(t)}{n}$. Our initial aim is to upper bound $\Gamma^{con}(t)$ in this case.

\begin{lemma} \label{lem:gammaboundPQ}
Given that $w(t)=\frac{Exp(\frac{1}{n})}{n}$ at every step $t$,
if $\alpha \le \frac{1}{4096 C e^3\log n}$ and $C \ge 433$, then at any time step $t$
	$$\E[\Gamma^{con}(t)] \leq \frac{146 C(\epsilon)n}{\alpha \epsilon}.$$
\end{lemma}

\begin{proof}
Our goal is to apply Lemma \ref{lem:gamma-bound} when $w(t)=\frac{Exp(\frac{1}{n})}{n}$ at every step $t$.
For this we will just need to show that Lemma \ref{lem:goodbound} still holds.
The key steps towards accomplishing this are 
generalizing Lemma \ref{lem:Chernoff} for exponential weights of mean $1$, as opposed to weights of value $1$, since we are no longer able to apply Chernoff's inequality and making sure that (\ref{eqn:needtomodify}) still holds.

First we establish the bounds in (\ref{weightproperty1}) and (\ref{weightproperty2}), in order
to be able to apply lemmas \ref{lem:BadBoundLower} and \ref{lem:BadBoundUpper}.
We know that for each $t$, $w(t)=\frac{Exp(1/n)}{n}$. Clearly $\E[w(t)]=\frac{\E[Exp(1/n)]}{n}=1$.
In this case the moment generating function is $M(x)=\E[e^{x w(t)}]=\E[e^{\frac{x}{n}
Exp(\frac{1}{n})}]=\frac{\frac{1}{n}}{\frac{1}{n}-\frac{x}{n}}=\frac{1}{1-x}$, for $x < 1$. This gives us that $M''(x)=\frac{2}{(1-x)^3}$.
Hence if $\lambda=1$ and $S=8$,  we have that for every so that for every $x<\lambda/2$ we have $M''(x) < 2S$.
This means that to apply Lemma \ref{lem:beta} we will need $\alpha \le \frac{1}{576}$ (which is feasible since we need an upper bound on $\alpha$ to be even smaller in order to prove Lemma $\ref{lem:gamma-bound}$.

Recall that previously we had that $H_{max}(t)=\max\{H_1(t), H_2(t), ..., H_i(t)\}$,
where $H_i(t)$ was the number of times bin $i$ was a random choice made by operations $op_{t-Cn+1}, op_{t-Cn+2}, ..., op_t$
and then we knew that if $op_{t+1}$ was a good operation, total increment received by bin $i$ by the operations
which were concurrent with $op_{t+1}$ was at most $H_i(t)$. In this to have the same property we redefine
$H_i(t)$ as follows.

For $0 \le u \le Cn$, let $z(u)=(z_1(u), z_2(u), ..., z_n(u))$ be the $n$ dimensional vector.
We assume that $z_i(0)=0$ for each $1 \le i \le n$. For $0 \le u < n$, consider operation $op_{t-Cn+u+1}$.
Let $i$ and $j$ be the random bins it chooses, we know that it increments the bin which has the smaller load 
at the time of performed reads by $w(t-Cn+u+1)$. We set $z_i(u+1)=z_i(u)+w(t-Cn+u+1)$, 
$z_j(u+1)=z_j(u)+w(t-Cn+u+1)$ (even if $i=j$) and for $k \neq i,j$ we set $z_k(u+1)=z_k(u)$.

Notice that $z_i(Cn)$ is the upper bound on the total increment received by bin $i$ from operations 
$op_{t-Cn+1}, op_{t-Cn+2}, ..., op_t$. And thus we can set $H_i(t)=z_i(Cn)$. 
Hence, our goal is to upper bound $\max\{z_1(Cn), z_2(Cn), ..., Z_n(Cn)\}$

We use argument similar to Lemma \ref{lem:BadBoundUpper}. 
Let $\Upsilon_i(u)=e^{\frac{z_i(u)}{8}}$ and $\Upsilon(u)=\sum_{i=1}^n \Upsilon_i(u)$ (Thus, $\Upsilon(0)=n$)
If $i$ and $j$ are chosen by operation $op_{t-Cn+u+1}$, then 

\begin{align*}
\E&[\Upsilon(u+1)|z(t)] - \Upsilon(u)\\&=
\E[\Upsilon_i(u+1)|z(t)] - \Upsilon_i(u)]+\E[\Upsilon_j(u+1)|z(t)] - \Upsilon_j(u)]
\\&= \Big(e^{\frac{z_i(u)}{8}}+e^{\frac{z_j(u)}{8}}\Big)
\Bigg(\E\Big[M\Big(\frac{1}{8})\Big)\Big]-1\Bigg) 
\\ &=
\Big(e^{\frac{z_i(u)}{8}}+e^{\frac{z_j(u)}{8}}\Big)
\Bigg(\E\Big[M(0)+\frac{M'(0)}{8}+\frac{M''(\xi)}{2\cdot 8^2}\Big]-1 \Bigg) 
\\&=
\Big(e^{\frac{z_i(u)}{8}}+e^{\frac{z_j(u)}{8}}\Big)
\Bigg(\E\Big[1+\frac{w(t-Cn+u+1)}{8}+\frac{M''(\xi)}{2\cdot 8^2}\Big]-1 \Bigg) 
\\&\le 
\frac{1}{4}\Big(e^{\frac{z_i}{8}}+e^{\frac{z_j}{8}}\Big)
\end{align*}
Where in the the last step we used 
$0 \le \xi \le \frac{1}{8}$, and as we established above $M''(x) \le 2S=16$ for each $x \le \frac{1}{2}$.
Also, recall that the weight is one in expectation at every step.
Hence,
\begin{align*}
\E[\Upsilon(u+1)|z(t)] - \Upsilon(u) &\le \frac{1}{n^2} \sum_{1\le i\le n, 1\le j \le n} \frac{1}{4}
\Big(e^{\frac{z_i}{8}}+e^{\frac{z_j}{8}}\Big)  \\&= \frac{2}{n} \sum_{i=1}^n \frac{1}{4} e^{\frac{z_i}{8}} 
= \frac{\Upsilon(u)}{2n}.
\end{align*}

After removing conditioning we get that 
\begin{align*}
\E[\Upsilon(u+1)] \le (1+\frac{1}{2}) \E[\Upsilon(u+1)].
\end{align*}
After applying the above inequality $Cn$ times we also get that 
\begin{align*}
\E[\Upsilon(Cn)] \le n(1+\frac{1}{2n})^{Cn} \le ne^{C/2}.
\end{align*}
Finally we proceed as in the proof of Lemma \ref{lem:main}
\begin{align*}
Pr[H_{max}(t) > 2KCe^3 \log n] &\le Pr[\exists i : z_i(Cn) > 2KCe^3 \log n] \\&\le Pr[\Upsilon(Cn) > e^{\frac{KCe^3 \log n}{4}}]
\\&\le \frac{ne^{C/2}}{e^{\frac{KCe^3 \log n}{4}}}.
\end{align*}
The last step is to verify that (\ref{eqn:needtomodify}) is still true.
Notice that even though the (\ref{eqn:needtomodify}) uses $\alpha \le \frac{1}{4096 C e^3\log n}$ and $C \ge 2$,
the Lemma \ref{lem:gamma-bound} requires that $C \ge 433$, and we will take advantage of this upper bound:

\begin{align*}
\sum_{K=1}^{\infty}
&\E[\Gamma^{con}(t)|A_K(t)]Pr[A_K(t)] (e^{2\alpha e^3C (K+1)\log{n}}-1)
\\&\overset{\text{Lemma \ref{lem:lowerboundcond}}}{\le} \sum_{K=1}^{\infty}  \frac{\E[\Gamma^{con}(t)]e^{5\alpha e^3C (K+1) \log{n}+\log n+\frac{C}{2}}}
{e^{\frac{K C e^3 \log{n}}{4}}} \\ &\le \sum_{K=1}^{\infty}  \frac{\E[\Gamma^{con}(t)]e^{K+\log n+\frac{C}{2}}}
{e^{2 K C  \log{n}}}.
\end{align*}
We have that $K \ge 1$, $\log n \ge \frac{1}{2}$ (Assuming $n \ge 2$), and $C \ge 433$,
thus we have that $\frac{C}{2} \le KC\log n$, $K \le \frac{KC \log n}{4}$ and $\log n \le \frac{KC \log n}{4}$.
Hence
\begin{align*}
\E&[\Gamma^{con}(t)|A_K(t)]Pr[A_K(t)] (e^{2\alpha e^3C (K+1)\log{n}}-1) \le 
\sum_{K=1}^{\infty}  \frac{\E[\Gamma^{con}(t)]}{e^{\frac{K C  \log{n}}{2}}} \\ &\le \sum_{K=1}^{\infty}  \frac{\E[\Gamma^{con}(t)]}{e^{Kn^{16}}}
\le \frac{\E[\Gamma^{con}(t)]}{2048}.
\end{align*}
With this in place we know that \ref{lem:goodbound} holds if even if $w(t)=\frac{Exp(\frac{1}{n})}{n}$ at every step $t$
and then we can just use Lemma \ref{lem:gamma-bound} to finish the proof.
\end{proof}

Now we are ready to upper bound the ranks of dequeued elements.
\begin{theorem}
\label{thm:queues}
	Assuming an oblivious adversary, the MultiQueue algorithm with parameter $n$ (Algorithm~\ref{algo:approx-queue}) is distributionally linearizable to a sequential randomized relaxed queue $Q_R$, which ensures that at each step $t$, the maximum expected rank of 
	dequeued element is $O(n \log^2 n)$, and average expected rank is $O(n \log n \log \log n)$.
\end{theorem}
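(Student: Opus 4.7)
My plan is to mirror the structure of Section 5, but lifted to the weighted setting corresponding to the MultiQueue. The correspondence is: each priority queue is a bin, and the weight stored at a bin is related to the oldest element in it via its timestamp/rank. An $\lit{Enqueue}$ corresponds to placing a (weighted) ball into a random bin; a $\lit{Dequeue}$ corresponds to a two-choice step that picks the \emph{less loaded} bin (i.e., the one whose $\lit{ReadMin}$ returned the smaller priority) and removes from it. Under this mapping, the rank of the dequeued element is controlled by the maximum gap $\id{Gap}(t) = \max_i y_i(t) - \min_i y_i(t)$ between bin weights, exactly as in the sequential analysis of \cite{AKLN17}. The oblivious adversary plus the consistent per-thread clocks provides the required mapping of operations to the sequential labeled transition system that defines the relaxed queue $Q_R$; the per-schedule mapping required by Definition \ref{def:dl} is obtained by ordering operations at their update steps and using the clock values to break ties.

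First, I would rerun the $\id{good}/\id{bad}$ dichotomy verbatim: Lemma \ref{claim:PrNoHit} depends only on the probability that a random bin is touched, not on weights, so any operation with contention at most $Cn$ is $\id{good}(1/5)$. Similarly Lemma \ref{lem:ConsOps} is purely combinatorial. The good-step lemma (Lemma \ref{lem:goodBound}) uses only majorization of the probability vector against that of the $(1+\beta)$ process and Theorem 2.9 of \cite{PTW15}; since Theorem 2.9 already holds for the heavily-loaded \emph{weighted} process on the complete graph, the same argument gives $\E[\Gamma(t+1)\mid y(t)] \leq \Gamma(t)(1 - \alpha\epsilon/(4m)) + c$ for good steps, provided the ball weights are bounded in expectation and have good moment behavior, which Exp(1) does.

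The main obstacle, as flagged in the excerpt, is extending Lemma \ref{lem:badBound} from unit weights to exponential weights of mean $1$. The issue is that a bad step now adds a random weight $W \sim \mathrm{Exp}(1)$ rather than $1$, so the per-bin update is $\Phi_i(t+1) = \exp(\alpha y_i(t) + \alpha W (1 - 1/m))$, and bounding $\E[\exp(\alpha W(1 - 1/m))]$ requires $\alpha < 1$ strictly, with an $\mathrm{MGF}$-style bound $\E[e^{\alpha W (1-1/m)}] = 1/(1 - \alpha(1-1/m)) = 1 + \alpha + O(\alpha^2)$. Plugging this back in and retracing the Taylor expansion in Lemma \ref{lem:badBound}, I would choose $\alpha$ a smaller constant (say $\alpha = \min(1/4, \epsilon/(6S))$ with $S$ adjusted to reflect the exponential's second moment) and obtain the same form
\[
\E[\Gamma(t+1)\mid y(t)] \leq \Big(1 + \frac{2}{m}\big(\alpha + \tfrac{\epsilon}{6}\big)\Big)\Gamma(t),
\]
possibly with a slightly different constant in front. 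The $\Psi$ side is easier since the additive weight only makes $\Psi$ decrease in expectation beyond the unit-weight bound.

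With these two weighted lemmas in hand, the endgame of Lemma \ref{lem:gamma-bound} goes through unchanged, yielding $\E[\Gamma(t)] = O(m)$ for all $t$, and hence $\E[\id{Gap}(t)] = O(\log m)$ and $\id{Gap}(t) = O(R\log m)$ with probability at least $1 - m^{-\Omega(R)}$, exactly as in Lemma \ref{lem:main}. To convert this weight-gap bound into a rank bound on dequeued elements, I would invoke the translation step from \cite{AKLN17}: the number of elements with timestamp in any interval of length $L$ is concentrated around $L$ (by the Poisson nature of the enqueue stream in the mapped process), so a $O(\log m)$ weight gap corresponds to $O(m\log m)$ elements of smaller rank than the head dequeued, in expectation and w.h.p.\ in $m$. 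Choosing $C$ sufficiently large (the constants compound from Lemma \ref{lem:gamma-bound} and from the weighted Taylor step) completes the proof. The main risk is in the constants arising from the weighted MGF bound; everything else is mechanical.
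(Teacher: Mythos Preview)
Your proposal is correct and follows essentially the same approach as the paper: the paper's proof likewise reduces to re-deriving Lemma~\ref{lem:badBound} for exponential weights of mean~$1$ via the moment generating function $M(z)=1/(1-z)$ and a Taylor expansion (with $S=8$ replacing $S=1$), observes that Lemma~\ref{lem:goodBound} already holds in the weighted setting, and then invokes Lemma~\ref{lem:gamma-bound} in place of the corresponding lemma from~\cite{AKLN17} to finish. Your write-up is in fact more explicit than the paper's about the combinatorial lemmas carrying over and about the gap-to-rank translation, but the technical content is the same.
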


\begin{proof}
First we bound the expected maximum rank of the elements on top. 
Recall that $x_{max}(t)$ and $x_{min}(t)$ are the largest and smallest weights of bins after $t$ steps
and let $v_{max}(t)=n x_{max}(t)$ and $v_{max}(t)=n x_{max}(t)$ be the largest and smallest labels on top of queues after $t$ dequeue operations.
We start by showing that for any $1\le i \le n$, 
\begin{align} \label{eqn:labelrankproperty}
\E[rank(v_i(t))] &\le \sum_{1 \le j \le n, j \neq i} (1+\frac{\E\Big[|v_i(t)-v_j(t)|\Big]}{n} \nonumber \\&=\sum_{1 \le j \le n, j \neq i} (1+\E\Big[|x_i(t)-x_j(t)|\Big]).
\end{align}
The proof is similar to the proof of Lemma 11 in \cite{AKLN17}.
We fix (condition on) $v_1(t),v_2(t),...,v_n(t)$.
For each $v_j(t) > v_i(t)$ we know that the label queue $j$ does not contain labels which are smaller than $v_i(t)$,
hence the labels in $j$ do not influence the rank of $v_i(t)$.
In the case when $v_j(t) < v_i(t)$, we know that the number of labels in $j$ which are smaller than $v_i(t)$ is 
one plus the number of labels in $j$ which belong to the interval $(v_j(t), v_i(t))$. We know that the difference 
between consecutive labels in each label queue is $Exp(\frac{1}{n})$ hence the expected number of labels in $j$ which belong to interval $(v_j(t),v_i(t))$
is upper bounded by $\E[Poi(\frac{v_i(t)-v_j(t)}{n})]=\frac{v_i(t)-v_j(t)}{n}$ (This simply follows from the properties of Exponential and Poisson distributions).
Thus, obviously $\E[rank(v_i(t))] \le \sum_{1 \le j \le n, j \neq i} (1+\frac{|v_i(t)-v_j(t)|}{n})$ and (\ref{eqn:labelrankproperty}) follows after removing 
conditioning on $v_1(t),v_2(t),...,v_n(t)$. With this in place we have that for any $1 \le i \le n$
\begin{align*}
\E[rank(v_{i}(t))] &\le \sum_{1 \le j \le n, j \neq i} (1+\E[\Big[|x_i(t)-x_j(t)|\Big]) \\ &\le (n-1)+(n-1)\E[x_{max}(t)-x_{min}(t)]) \\ &= O(n \log^2 n).
\end{align*}
Where the last step follows from the proof of Lemma \ref{lem:main} where it is shown that 
both $\E[x_{max}(t)-\mu(t)]$ and $\E[\mu(t)-x_{min}(t)]$ are $O(\log^2 n)$.

Next we aim to upper bound $\sum_{i=1}^n \frac{\E[rank(v_i(t))]}{n}$.
This is exactly average expected rank of removed label since during removal we choose both label queues uniformly at random.
We have that 
\begin{align} \label{eqn:labelrankproperty_v2}
\sum_{i=1}^n \frac{\E[rank(v_i(t))]}{n} &\le \frac{1}{n} \sum_{i=1}^n \sum_{1 \le j \le n, j \neq i} (1+\E[\Big[|x_i(t)-x_j(t)|\Big])
\nonumber \\ &\le n+\frac{1}{n} \sum_{i=1}^n \sum_{j=1}^n \E[\Big[|x_i(t)-x_j(t)|\Big]
\nonumber \\ &\le n+\frac{1}{n} \sum_{i=1}^n \sum_{j=1}^n \E[\Big[|x_i(t)-\mu(t)|+|x_j(t)-\mu(t)|\Big] 
\nonumber \\&=
n+2\sum_{i=1}^n \E[\Big[|x_i(t)-\mu(t)|\Big].
\end{align}
Using Jensen's inequality we get that 
\begin{align*}
\frac{\sum_{i=1}^n \alpha |x_i(t)-\mu(t)|}{n} &= \log \Bigg(e^{\frac{\sum_{i=1}^n \alpha |x_i(t)-\mu(t)|}{n}}\Bigg) \\&\le 
\log \Bigg(\frac{\sum_{i=1}^n e^{\alpha |x_i(t)-\mu(t)|}}{n}\Bigg) \le \log \Bigg(\frac{\Gamma_{con}(t)}{n}\Bigg).
\end{align*}

Hence
\begin{align*} 
\E\Big[\frac{\sum_{i=1}^n \alpha |x_i(t)-\mu(t)|}{n}\Big] &\le \E\Big[\log \Bigg(\frac{\Gamma_{con}(t)}{n}\Bigg)\Big]
\overset{Jensen}{\le} \log \Bigg(\frac{\E[\Gamma_{con}(t)]}{n}\Bigg) \\&\overset{\text{Lemma \ref{lem:gammaboundPQ}}}{\le}
\log \Bigg(\frac{146 C(\epsilon)}{\alpha \epsilon}\Bigg)=O(\log \log n).
\end{align*}
Where in the last step we used $\alpha=\Theta(\frac{1}{\log n})$ and for the same reason we get that $\sum_{i=1}^n |x_i(t)-\mu(t)|=O(n \log n \log \log n)$.
Finally, (\ref{eqn:labelrankproperty_v2}) gives us that 
\begin{align*}
\sum_{i=1}^n \frac{\E[rank(v_i(t))]}{n} \le n+2O(n \log n \log \log n)=O(n \log n \log \log n).
\end{align*}
\end{proof}

\section{Experimental Results}
\label{sec:tests}

\paragraph{Setup.} 
Our experiments were run on an Intel E7-4830 v3 with 12 cores per socket and 2 hyperthreads (HTs) per core, for a total of 24 threads, and 128GB of RAM.
In all of our experiments, we pinned threads to avoid unnecessary context switches.
Hyperthreading is only used with more than 12 threads.
The machine runs Ubuntu 14.04 LTS.
All code was compiled with the GNU C++ compiler (G++) 6.3.0 with compilation options \texttt{-std=c++11 -mcx16 -O3}.

\paragraph{Synthetic Benchmarks.} 
We implemented and benchmarked the MultiCounter algorithm on a multicore machine. To test the behavior under contention, threads continually increment the counter value using the two-choice process. 
We use no synchronization other than the atomic fetch and increment instruction for the update. 
Figure~\ref{fig:scal} shows the scalability results, while Figure~\ref{fig:qual} shows the ``quality" guarantees of the implementation in terms of values returned by the counter over time, as well as maximum gap between bins over time. Quality is measured in a single-threaded execution, for $64$ counters. (Recording quality accurately in a concurrent execution appears complicated, as it is not clear how to order the concurrent read steps.)

\begin{figure*}[t!]
\centering     
\subfigure[Scalability of the concurrent counter for different values of the ratio $C$ between counters and \# threads.]{\label{fig:scal}\includegraphics[width=0.43\textwidth]{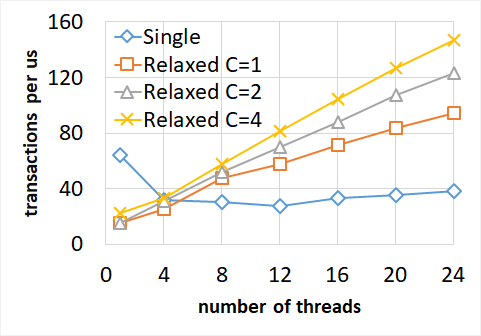}}~~~
\subfigure[Quality results for the concurrent counter in a single-threaded execution. The $x$ axis is \# increments.]{\label{fig:qual}\includegraphics[width=0.43\textwidth]{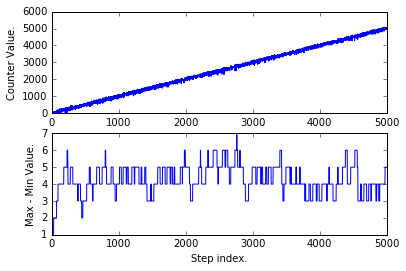}} \\ 
\subfigure[TL2 benchmark, 1M objects.]{\label{fig:scal1}\includegraphics[width=0.32\textwidth]{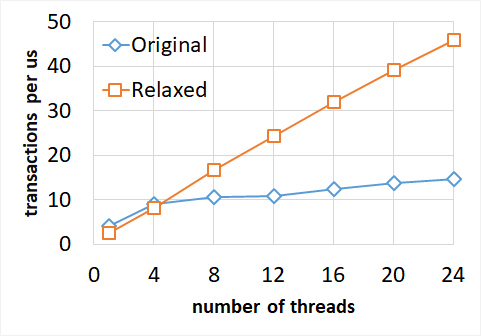}}
\subfigure[TL2 benchmark, 100K objects.]{\label{fig:scal2}\includegraphics[width=0.32\textwidth]{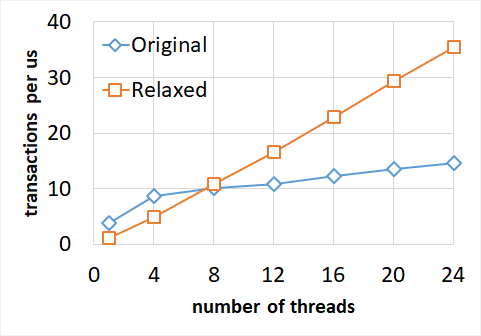}}
\subfigure[TL2 benchmark, 10K objects.]{\label{fig:scal3}\includegraphics[width=0.32\textwidth]{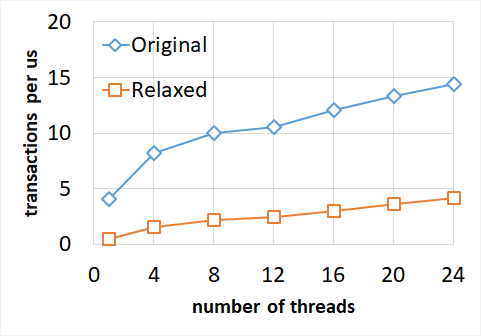}} 
\caption{Experimental Results for the Concurrent Counter}
\end{figure*}

\paragraph{TL2 Benchmark.}
Transactional Locking II (TL2) is a software implementation of transactional memory introduced by~\cite{tl2}.
TL2 guarantees opacity by using fine-grained locking and a global clock $G$. 
TL2 associates a \textit{version lock} with each memory location.
A version lock behaves like a traditional lock, except it additionally stores a version number that represents the value of $G$ when the memory location protected by the lock was last modified.
At a high level, a transaction starts by reading $G$, and uses the clock value it reads to determine whether it ever observes the effects of an uncommitted transaction. 
If so, the transaction will abort.
Otherwise, after performing all of its reads, it locks the addresses in its write set (validating these locations to ensure that they have not been written recently), rereads $G$ to obtain a new version $v'$, performs its writes, then releases its locks, updating their versions to $v'$.

\paragraph{TL2 with Relaxed Global Clocks.} In the standard implementation of TL2, $G$ is incremented using fetch-and-add (FAA).
This quickly becomes a concurrency bottleneck as the number of threads increases, so the the authors developed several improved implementations of $G$. However, they too experience scaling problems at large thread counts. 
We replace this global clock counter $G$ with a MultiCounter implementation, and compare against a highly-optimized baseline implementation.  

Due to the fact that the counter is relaxed, reasoning about the correctness of the resulting algorithm is no longer straightforward. 
In particular, a key property we need to enforce is that the timestamp which a thread writes to a set of objects as part of its transaction (generated when the thread is holding locks to commit and written to all objects in its write set) cannot be held by any other threads at the same time, since such threads might read those concurrent updates concurrently, and believe that they occurred in the past. 
For this reason, we modify the TL2 algorithm so that threads write ``in the future," by adding a quantity $\Delta$, which exceeds the maximum clock skew we expect to encounter in the MultiCounter over an execution, to the maximum timestamp $t_{\max}$ they have encountered during their execution so far. Thus, each new write always increments an object's timestamp by $\geq \Delta$. 
We stress that that the (approximate) global clock is implemented by the MultiCounter algorithm, and that it is disjoint from the object timestamps. 

This protocol induces the following trade-offs. First, the resulting transactional algorithm only ensures safety with high probability, since the $\Delta$ bound might be broken at some point during the execution, and lead to a non-serializable transaction, with extremely low probability. Second, we note that, once an object is written with a timestamp that occurs in the future, transactions which immediately read this object may abort, since they see a timestamp that is larger than theirs. 
Hence, once an object is written, at least $\Delta$ operations should occur \emph{without accessing this object}, so that the system clock is incremented past the read point without causing readers to abort. 
Intuitively, this upper bounds the frequency at which objects should be written to for this approximate timestamping mechanism to be efficient. 
On the positive side, this mechanism allows us to break the scalability bottleneck caused by the global clock. 

We verify this intuition through implementation. See Figures~\ref{fig:scal1}---\ref{fig:scal3}. 
We are given an array of $n$ transactional objects, with $n$ between 10K and 1M. 
Transactions pick 2 array locations uniformly at random, then start a transaction, increment both locations, and then commit the transaction. We record the average throughput out of ten one-second experiments. 
We verify correctness by checking that the array contents are consistent with the number of executed operations at the end of the run; none of these experiments have resulted in erroneous outputs. 
We record the rate at which transactions commit, as a function of the number of threads. 
We note that, for 1M and 100K objects, the average frequency at which each location is written is below the heavy abort threshold, and we obtain almost linear scaling with MultiCounters. 
At 10K objects we surpass this threshold, and see a considerable drop in performance, because of a large number of aborts.

\section{Conclusions and Future Work}

We have presented the first concurrent analysis of the two-choice load-balancing process, showing that this classic randomized algorithm is in fact robust to asynchrony under an oblivious adversary. Our analysis extends existing tools, namely~\cite{PTW15}, in non-trivial ways, in particular by showing that the potential analysis can withstand adversarially corrupted updates. 
Our results have non-trivial practical applications, as they show that a popular set of randomized concurrent data structures in fact provide strong probabilistic guarantees in arbitrary executions, which we express via a new correctness condition called \emph{distributional linearizability}. This inspires a scalable approximate counting mechanism, trading off contention and exactness guarantees, which can be used to scale a transactional application.

%

\bibliography{bibliography}

\begin{thebibliography}{10}

\bibitem{Quasi}
Yehuda Afek, Guy Korland, and Eitan Yanovsky.
\newblock Quasi-linearizability: Relaxed consistency for improved concurrency.
\newblock In {\em International Conference on Principles of Distributed
  Systems}, pages 395--410. Springer, 2010.

\bibitem{Alistarh14}
Dan Alistarh, James Aspnes, Keren Censor-Hillel, Seth Gilbert, and Rachid
  Guerraoui.
\newblock Tight bounds for asynchronous renaming.
\newblock {\em J. ACM}, 61(3):18:1--18:51, June 2014.

\bibitem{AKLN17}
Dan Alistarh, Justin Kopinsky, Jerry Li, and Giorgi Nadiradze.
\newblock The power of choice in priority scheduling.
\newblock In Elad~Michael Schiller and Alexander~A. Schwarzmann, editors, {\em
  Proceedings of the {ACM} Symposium on Principles of Distributed Computing,
  {PODC} 2017, Washington, DC, USA, July 25-27, 2017}, pages 283--292. {ACM},
  2017.

\bibitem{SprayList}
Dan Alistarh, Justin Kopinsky, Jerry Li, and Nir Shavit.
\newblock The spraylist: A scalable relaxed priority queue.
\newblock In {\em 20th ACM SIGPLAN Symposium on Principles and Practice of
  Parallel Programming}, PPoPP 2015, San Francisco, CA, USA, 2015. ACM.

\bibitem{AttiyaWelch}
Hagit Attiya and Jennifer Welch.
\newblock {\em Distributed computing: fundamentals, simulations, and advanced
  topics}, volume~19.
\newblock John Wiley \& Sons, 2004.

\bibitem{ABKU}
Yossi Azar, Andrei~Z Broder, Anna~R Karlin, and Eli Upfal.
\newblock Balanced allocations.
\newblock {\em SIAM journal on computing}, 29(1):180--200, 1999.

\bibitem{Basin11}
Dmitry Basin, Rui Fan, Idit Keidar, Ofer Kiselov, and Dmitri Perelman.
\newblock Caf\'{E}: Scalable task pools with adjustable fairness and
  contention.
\newblock In {\em Proceedings of the 25th International Conference on
  Distributed Computing}, DISC'11, pages 475--488, Berlin, Heidelberg, 2011.
  Springer-Verlag.

\bibitem{BCEFN12}
Petra Berenbrink, Artur Czumaj, Matthias Englert, Tom Friedetzky, and Lars
  Nagel.
\newblock Multiple-choice balanced allocation in (almost) parallel.
\newblock In {\em APPROX-RANDOM}, pages 411--422. Springer, 2012.

\bibitem{Berenbrink00}
Petra Berenbrink, Artur Czumaj, Angelika Steger, and Berthold V\"{o}cking.
\newblock Balanced allocations: The heavily loaded case.
\newblock In {\em Proceedings of the Thirty-second Annual ACM Symposium on
  Theory of Computing}, STOC '00, pages 745--754, New York, NY, USA, 2000. ACM.

\bibitem{Berenbrink08}
Petra Berenbrink, Tom Friedetzky, Zengjian Hu, and Russell Martin.
\newblock On weighted balls-into-bins games.
\newblock {\em Theor. Comput. Sci.}, 409(3):511--520, December 2008.

\bibitem{DeoPra92}
N.~Deo and S.~Prasad.
\newblock Parallel heap: An optimal parallel priority queue.
\newblock {\em The Journal of Supercomputing}, 6(1):87--98, March 1992.

\bibitem{DiceLM13}
Dave Dice, Yossi Lev, and Mark Moir.
\newblock Scalable statistics counters.
\newblock In {\em 25th ACM Symposium on Parallelism in Algorithms and
  Architectures, SPAA '13, Montreal, QC, Canada}, pages 43--52, 2013.

\bibitem{tl2}
Dave Dice, Ori Shalev, and Nir Shavit.
\newblock Transactional locking ii.
\newblock In {\em International Symposium on Distributed Computing}, pages
  194--208. Springer, 2006.

\bibitem{gonzalez2012powergraph}
Joseph~E. Gonzalez, Yucheng Low, Haijie Gu, Danny Bickson, and Carlos Guestrin.
\newblock Powergraph: Distributed graph-parallel computation on natural graphs.
\newblock In Chandu Thekkath and Amin Vahdat, editors, {\em 10th {USENIX}
  Symposium on Operating Systems Design and Implementation, {OSDI} 2012,
  Hollywood, CA, USA, October 8-10, 2012}, pages 17--30. {USENIX} Association,
  2012.

\bibitem{LocalLin}
Andreas Haas, Thomas~A Henzinger, Andreas Holzer, Christoph~M Kirsch, Michael
  Lippautz, Hannes Payer, Ali Sezgin, Ana Sokolova, and Helmut Veith.
\newblock Local linearizability.
\newblock {\em arXiv preprint arXiv:1502.07118}, 2015.

\bibitem{Haas}
Andreas Haas, Michael Lippautz, Thomas~A. Henzinger, Hannes Payer, Ana
  Sokolova, Christoph~M. Kirsch, and Ali Sezgin.
\newblock Distributed queues in shared memory: multicore performance and
  scalability through quantitative relaxation.
\newblock In Hubertus Franke, Alexander Heinecke, Krishna~V. Palem, and Eli
  Upfal, editors, {\em Computing Frontiers Conference, CF'13, Ischia, Italy,
  May 14 - 16, 2013}, pages 17:1--17:9. {ACM}, 2013.

\bibitem{Henzinger}
Thomas~A. Henzinger, Christoph~M. Kirsch, Hannes Payer, Ali Sezgin, and Ana
  Sokolova.
\newblock Quantitative relaxation of concurrent data structures.
\newblock {\em SIGPLAN Not.}, 48(1):317--328, January 2013.

\bibitem{HerlihyWing}
Maurice~P. Herlihy and Jeannette~M. Wing.
\newblock Linearizability: A correctness condition for concurrent objects.
\newblock {\em ACM Trans. Program. Lang. Syst.}, 12(3):463--492, July 1990.

\bibitem{KarZha93}
R.~M. Karp and Y.~Zhang.
\newblock Parallel algorithms for backtrack search and branch-and-bound.
\newblock {\em Journal of the ACM}, 40(3):765--789, 1993.

\bibitem{lenharth2015priority}
Andrew Lenharth, Donald Nguyen, and Keshav Pingali.
\newblock Priority queues are not good concurrent priority schedulers.
\newblock In {\em European Conference on Parallel Processing}, pages 209--221.
  Springer, 2015.

\bibitem{Lenzen}
Christoph Lenzen and Roger Wattenhofer.
\newblock Tight bounds for parallel randomized load balancing.
\newblock {\em Distrib. Comput.}, 29(2):127--142, April 2016.

\bibitem{Mitzenmacher00}
Michael Mitzenmacher.
\newblock How useful is old information?
\newblock {\em IEEE Transactions on Parallel and Distributed Systems},
  11(1):6--20, 2000.

\bibitem{Mitz}
Michael~David Mitzenmacher.
\newblock {\em The Power of Two Random Choices in Randomized Load Balancing}.
\newblock PhD thesis, PhD thesis, Graduate Division of the University of
  California at Berkley, 1996.

\bibitem{Nguyen13}
Donald Nguyen, Andrew Lenharth, and Keshav Pingali.
\newblock A lightweight infrastructure for graph analytics.
\newblock In {\em Proceedings of the Twenty-Fourth ACM Symposium on Operating
  Systems Principles}, SOSP '13, pages 456--471, New York, NY, USA, 2013. ACM.

\bibitem{PTW15}
Yuval Peres, Kunal Talwar, and Udi Wieder.
\newblock Graphical balanced allocations and the 1 + {beta}-choice process.
\newblock {\em Random Struct. Algorithms}, 47(4):760--775, December 2015.

\bibitem{Richa01}
Andrea~W Richa, M~Mitzenmacher, and R~Sitaraman.
\newblock The power of two random choices: A survey of techniques and results.
\newblock {\em Combinatorial Optimization}, 9:255--304, 2001.

\bibitem{MQ}
Hamza Rihani, Peter Sanders, and Roman Dementiev.
\newblock Brief announcement: Multiqueues: Simple relaxed concurrent priority
  queues.
\newblock In {\em Proceedings of the 27th ACM Symposium on Parallelism in
  Algorithms and Architectures}, SPAA '15, pages 80--82, New York, NY, USA,
  2015. ACM.

\bibitem{San98}
P.~Sanders.
\newblock Randomized priority queues for fast parallel access.
\newblock {\em Journal Parallel and Distributed Computing, Special Issue on
  Parallel and Distributed Data Structures}, 49:86--97, 1998.

\bibitem{LotanShavit}
Nir Shavit and Itay Lotan.
\newblock Skiplist-based concurrent priority queues.
\newblock In {\em Parallel and Distributed Processing Symposium, 2000. IPDPS
  2000. Proceedings. 14th International}, pages 263--268. IEEE, 2000.

\bibitem{TW07}
Kunal Talwar and Udi Wieder.
\newblock Balanced allocations: The weighted case.
\newblock In {\em Proceedings of the Thirty-ninth Annual ACM Symposium on
  Theory of Computing}, STOC '07, pages 256--265, New York, NY, USA, 2007. ACM.

\bibitem{klsm}
Martin Wimmer, Jakob Gruber, Jesper~Larsson Tr{\"{a}}ff, and Philippas Tsigas.
\newblock The lock-free k-lsm relaxed priority queue.
\newblock {\em CoRR}, abs/1503.05698, 2015.

\end{thebibliography}

\end{document}